\newcommand{\EQ}{\begin{eqnarray}}
\newcommand{\EN}{\end{eqnarray}}
\newcommand{\EQQ}{\begin{eqnarray*}}
\newcommand{\ENN}{\end{eqnarray*}}
\newcommand{\col}{\mbox{col}}
\newtheorem{thm}{Theorem}
\newtheorem{lem}{Lemma}
\newtheorem{rem}{Remark}
\newtheorem{defi}{Definition}
\newtheorem{prob}{Problem}
\newtheorem{exam}{Example}
\newenvironment{proof}{\noindent{\em Proof:\/}}{\hfill $\Box$\par}
\newcommand{\myr}{}
\newcommand{\meng}{}
\begin{document}
\begin{sloppypar}
\begin{frontmatter}
\title{Learning nonlinear dynamics in synchronization of knowledge-based leader-following networks\tnoteref{footnoteinfo}}
%
\tnotetext[t1]{This work was supported by the Natural Sciences and Engineering Research Council of Canada, and National Natural Science Foundation of China under Project 61773322.
This paper has not been presented at any conference.
\newline
* Corresponding author.}

\author[Wang]{Shimin Wang}\ead{shimin.wang@queensu.ca}
\author[meng]{Xiangyu Meng}\ead{xmeng5@lsu.edu}
\author[Zhang]{Hongwei Zhang*}\ead{hwzhang@hit.edu.cn}
\author[Frank]{Frank L. Lewis}\ead{lewis@uta.edu}
\address[Wang]{Department of Chemical Engineering, Queen's University, Kingston, Ontario, K7L 3N6, Canada.}
\address[meng]{Division of Electrical and Computer Engineering, Louisiana State University, Baton Rouge, LA 70803, USA.}
\address[Zhang]{School of Mechanical Engineering and Automation, Harbin Institute of Technology, Shenzhen, Guangdong 518055, P.R. China}
\address[Frank]{UTA Research Institute, University of Texas at Arlington, Ft. Worth, TX 76118, USA}

\begin{abstract}

Knowledge-based leader-following synchronization  of heterogeneous nonlinear multi-agent systems is a  challenging problem, since the leader's dynamic information is unknown to {\myr any follower node}. This paper proposes a learning-based fully distributed observer for a class of nonlinear leader systems, which can simultaneously learn the leader's dynamics and states. {\myr This} class of leader dynamics {\myr is rather general and} does not require a bounded Jacobian matrix. Based on this learning-based distributed observer, we further synthesize an adaptive distributed control law for solving the leader-following synchronization problem of multiple Euler-Lagrange systems subject to an uncertain nonlinear leader system. The results are illustrated by a simulation example.
\end{abstract}

\begin{keyword}
Distributed observer, Euler-Lagrange system, multi-agent system, parameter estimation, synchronization.
\end{keyword}
\end{frontmatter}

\section{Introduction}
Synchronization of networked multi-agent systems has found its applications in various scenarios, such as flocking \citep*{bullo2009distributed}, coupled distributed estimation \citep*{olfati2012coupled}, and formation of multiple robots \citep*{dorfler2010geometric,feng2019finite,roy2021adaptive}, and has been extensively studied for the past several decades.
A central class of synchronization problems are  the so-called leader-following synchronization problems (also called cooperative
tracking problems \citep{zhang2012adaptive}), which aim to drive each follower system to behave in the same motion or oscillate concurrently with the leader system \citep*{chen2014pattern,isidori2014robust,radenkovic2018distributed}.

According to \cite{wang2006theoretical}, leader-following synchronization problems can be divided into two categories: power-based leader-following synchronization problems and knowledge-based leader-following synchronization problems.
For the former one, both followers and the leader node share the same system dynamics. For the latter one, the leader's dynamic knowledge (i.e., parameters) is unaccessible to any follower and each follower adaptively learns the leader's dynamic knowledge  through neighborhood {\myr interactions}. In this sense, the leader is also called an uncertain leader in the literature \citep*{wanghuang2018,baldi2020distributed}.

Most recent works \citep*{wu2017adaptive,klotz2014asymptotic} solved the leader-following synchronization problem by assuming that some/all followers know the leader's parameters. In contrast, solving knowledge-based leader-following synchronization problem is much challenging, since no followers can get access to the leader's knowledge (i.e., parameters) and only part of them can sense the leader's state or output signals. There are generally two approaches to handle the uncertain leader. One approach is the so-called distributed adaptive internal model design approach \citep*{su2013cooperative}. Then, under the assumption that the leader has a unitary relative degree and strongly minimum phase properties, the synchronization problem can be converted to a global adaptive robust stabilization problem.
Another is the adaptive distributed observer design approach, which was studied recently in \cite{wanghuang2018} and \cite{wangmeng2020}, where each follower node maintains an observer, estimating the state and/or learning the parameters of the leader node in a distributed manner, and then a decentralized controller is designed for each follower node to drive it to track the output of its observer.
Some early tries of the observer design can also be found in \cite{modares2016optimal}, where the convergence analysis of the estimated parameters {\myr is not provided.} Very recently, \cite{baldi2020distributed} also studied the knowledge-based leader-following synchronization problem and proposed an adaptive distributed observer to {\myr simultaneously} estimate the state and learn the knowledge of a linear leader. They pointed out that it is a starting point to extend this approach to more general cases. Although much attention has been paid to uncertain linear leader dynamics, insufficient investigation has been reported for knowledge-based leader-following networks with  nonlinear dynamics, to the best of the {\myr authors'} knowledge.

However, in practical applications, dynamics of the leader are often described by  nonlinear systems  \citep*{wangslotine2004,wang2006theoretical,zhang2012adaptive}, and the nonlinear leader system may contain some parameters unknown to any of the followers.
It is noted that the techniques developed in \cite{wanghuang2018} and \cite{baldi2020distributed} for {\myr linear leaders with unknown parameters}
cannot be applied to nonlinear systems. {\myr In this
paper, we consider an uncertain nonlinear leader}. Compared with its linear counterpart, the knowledge-based leader-following synchronization problem with a nonlinear leader is recognized to be much challenging, and is worthy of further  investigation.

Note that by considering the multi-agent system consisting only of the leader node and the learning-based observers, the distributed observer design problem is essentially the same with the knowledge-based leader-following synchronization problem \citep*{wang2006theoretical} and has the well-known frequency estimation problem \citep*{HsuOrtega1999TAC-globally} as a special case. 
In the literature, synchronization of coupled nonlinear systems can be achieved under either of the following three common conditions: the global Lipschitz condition \citep*{wangslotine2004,zhou2006adaptive,yu2009pinning}, the quadratic condition \citep*{lu2004synchronization,delellis2010synchronization}, and the contracting condition \citep*{wang2006theoretical}. The relationship among these conditions was reported in \cite{delellis2010quad}.
Essentially, the aforementioned works try to use a sufficiently large constant to stabilize the coupled nonlinear dynamic networks.
For more general scenarios, when the nonlinear dynamics of the leader do not satisfy the above conditions, a practical approach for synchronization of networked nonlinear systems is to use neural networks to approximate the unknown nonlinearities \citep*{zhang2012adaptive}, which only guarantees boundedness of the synchronization error, without estimating the leader's dynamics.

{\myr It is worth noting that, in the literature of multi-agent systems, many works require the knowledge of the whole  communication graph, such as its Laplacian matrix, which is a certain global information. This destroys  the scalability of the design in some extent.
For example, to handle a
nonlinear leader with known leader's parameters, \cite{DongChen202021} assume that the term related to the nonlinearity
and the network topology can be bounded by
some smooth function. This approach relies on the global
knowledge of the communication graph, which is not fully
distributed according to \cite{li2014designing}.
}

Motivated by the above-mentioned statements, in this paper, we consider a more practical scenario of knowledge-based leader-following synchronization problems, where the leader node has {\meng a nonlinear dynamical model which} is not known by any followers. We aim at designing a fully distributed control law, {\meng which does not require any global
knowledge of the communication graph. Therefore, the
design is not affected by the topology change of the communication
graph. In other words, it is {\myr totally} scalable.} We first establish a learning-based fully distributed observer for a class of nonlinear leader systems whose {\meng nonlinearities can be written as the product between a weight vector and
a regressor matrix}. Under some standard assumptions, this learning-based distributed observer can estimate and pass the leader's  state to each follower through local {\myr interactions} without knowing the leader's  parameters. {\meng If} the leader's regressor matrix is persistently exciting, this distributed observer can also asymptotically adaptively learn the leader's parameters.
{\myr Since the nonlinear dynamics of the leader and information
of the communication network gets involved in the analysis and design of learning-based fully distributed observer, the design of learning-based fully distributed observer proposed in \cite{wanghuang2018} and \cite{baldi2020distributed} cannot be used to estimate and learn the state and knowledge of a nonlinear leader.}
Our observer design removes the bounded Jacobian matrices assumption or the quadratic condition on the coupled nonlinear systems \citep*{wangslotine2004,wang2006theoretical,zhou2006adaptive}.
 Moreover, the global information of the communication graph has been completely removed by using an adaptive technique. {\myr It is different from the one proposed in \cite{li2014designing} and \cite{DongChen202021}}, which is not applicable in our case of nonlinear leader dynamics.
As an application, based on this learning-based distributed observer, we further synthesize an adaptive distributed control law for solving the leader-following synchronization problem of multiple Euler - Lagrange systems subject to an uncertain nonlinear leader system.

{\meng The contributions of this article are: (1) relaxing the
bounded Jacobian matrices assumption or the quadratic
condition on the coupled nonlinear systems \citep{wangslotine2004,wang2006theoretical,zhou2006adaptive} and (2) using
an adaptive technique to obtain fully distributed design
without relying on any global information of the communication
graph.}
%


The rest of this paper is organized as follows: Section \ref{section1}   formulates the problem. In Section \ref{section2}, we establish a
learning-based distributed observer for a nonlinear leader system whose parameters are unknown to any follower nodes. As an application of the learning-based distributed observer, we apply the learning-based distributed observer to synthesize an adaptive distributed control law for solving
the leader-following synchronization problem of multiple Euler - Lagrange systems subject to an uncertain nonlinear leader system in Section \ref{section4}.
A simulation example is given in Section \ref{section5},  and Section \ref{section6} concludes the paper.

\textbf{Notation:} Let $\otimes$ denote the Kronecker product of matrices. {\myr A function with $k$ continuous derivatives is called a $C^k$ function. $\mathds{1}_N=\col(1,\dots,1)$, and $I_n$ denotes the identity matrix with dimension $n$. $\mathds{R}^{+}$ denotes all the positive real numbers}. For $X_1,\dots,X_k\in \mathds{R}^{n\times m}$, $\mbox{col}(X_1,\dots,X_k)=[X_{1}^{T}, \dots,X_{k}^{T}]^T$ and $$\mbox{blkdiag} (X_1,\dots,X_k)=\left[
                                                      \begin{array}{ccc}
                                                        X_1 &   &   \\
                                                          & \ddots &   \\
                                                          &   & X_k\\
                                                      \end{array}
                                                    \right].
$$ For  $ x \in \mathds{R}^m$, unless indicated otherwise, $x_i$ denotes the $i$-th component of $x$, $\|x\|$ and $\|x\|_p$ denote the $2$-norm and the $p$-norm of $x$, respectively.

\section{Preliminaries and problem formulation}\label{section1}
\subsection{Graph theory}\label{sectionA}
The communication/interaction network of a multi-agent system composed of $N$ followers and one leader is described by a graph $\bar{\mathcal{G}}=\left(\bar{\mathcal{V}},\bar{\mathcal{E}}\right)$ with
$\bar{\mathcal{V}}=\{1,\dots,N,N+1\}$ and $\bar{\mathcal{E}}\subseteq \bar{\mathcal{V}}\times \bar{\mathcal{V}}$ being the node set and the edge set, respectively.
Here node $N+1$ is associated
with the leader and nodes $i$, $i = 1,\dots,N$,
are associated with the followers. For $i=1,\dots,N, N+1$ and $j=1,\dots,N$, $(i,j) \in \bar{\mathcal{E}}$ if and only if node $j$ can get the
information of node $i$ for control purpose. Let
$\bar{\mathcal{N}}_i=\{j|(j,i)\in \bar{\mathcal{E}}\}$
denote the neighbor set of agent $i$.
Let ${\cal G}=({\cal V},{\cal E})$  denote
the induced subgraph of $\bar{\mathcal {G}}$ with ${\cal
V}=\{1,\dots,N\}$, which captures the interaction among follower nodes. Assume that $\mathcal{\bar{G}}$ contains a spanning tree with node $N+1$ being the root, and $\mathcal{G}$ is an undirected graph. Let $\mathcal{\bar{L}}$ be the
Laplacian matrix of the graph $\mathcal{\bar{G}}$, and $H$ is obtained by deleting the last row and column of $\mathcal{\bar{L}}$. Then, $H$ is a symmetric
positive definite matrix \citep*{zhang2012adaptive}. More details of the graph theory can be found in \cite{godsil2013algebraic}.

\subsection{Problem formulation}
Let the follower nodes be described by general nonlinear systems
\begin{align}\label{followerdynamic}
    \dot{x}_i=& \textbf{f}_i(x_i,\tau_i),\nonumber\\
y_i=&\textbf{h}_i(x_i,\tau_i),\;i=1,\dots,N,
\end{align}
where $x_i\in \mathds{R}^{n_{i}}$, $y_i\in \mathds{R}^{n}$ and $\tau_i\in \mathds{R}^{m_{i}}$ are the state, measurement output and control input of the $i$th follower, and $\textbf{f}_i(\cdot)$ and $\textbf{h}_i(\cdot)$ are globally defined and sufficiently smooth functions vanishing at the origin.

The leader's output signal $q_{N+1} \in \mathds{R}^{n}$, is generated by the following nonlinear system
\begin{subequations}\label{leader}
\begin{align}
\dot{v}&=p(v,\omega),\label{leadera} \\
  q_{N+1}&=Ev, \label{leaderb}
\end{align}
\end{subequations}
where $v\in \mathds{R}^{m}$, $p(\cdot,\omega)$ is a globally defined and sufficiently smooth function vanishing at the origin, $\omega\in \mathds{R}^{l}$ is a constant vector consisting of the unknown parameters of the leader node, $E\in \mathds{R}^{n\times m}$ is a known constant matrix.  The uncertain nonlinear system \eqref{leader} is assumed to satisfy
\begin{align}\label{pphi}{ p(v, \omega)}=\phi\left(v\right)\omega,\end{align}
where the regressor matrix $\phi\left(\cdot\right)\in \mathds{R}^{m\times l}$ is known. It is assumed that given any compact set $\mathds{V}_0$, there exists a compact set $\mathds{V}$ such that, for any $v(0)\in \mathds{V}_0$, $v(t)\in \mathds{V}$ for all $t\geq 0$. The system \eqref{leader} encompasses a class of nonlinear systems that can generate stable limit cycles, such as the well-known Van der Pol system, {\myr which can describe lots of periodic behavior arising from physics, biology, chemistry and  engineering \citep*{buchli2006engineering}}.

\begin{figure*}[htp]
\centering
\includegraphics[clip, trim=140  588  63 118, scale=0.9]{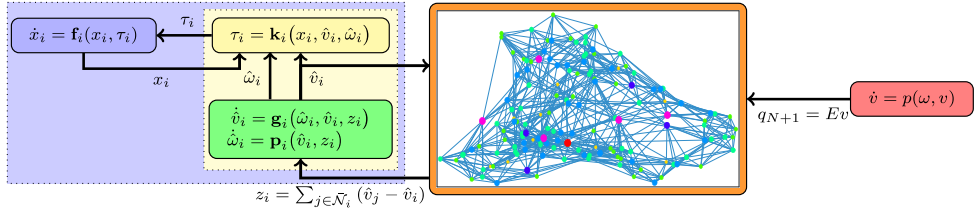}
\caption{{\myr Schematic of learning-based distributed control design}}\label{figschematic}
\end{figure*}
{\myr The design schematic is shown in Figure \ref{figschematic}. We consider the following class of distributed control laws}
\begin{subequations}\label{generalconrol}\begin{align}
\tau_i=&\textbf{k}_i(x_i,\hat{v}_i,\hat{\omega}_i),\\
\dot{\hat{v}}_i=&\textbf{g}_i\left(\hat{\omega}_i,\hat{v}_i,\sum\nolimits_{j\in\bar{\mathcal{N}}_i}(\hat{v}_j-\hat{v}_i)\right),\label{generalconrolb}\\\
\dot{\hat{\omega}}_i=&\textbf{p}_i\left(\hat{v}_i,\sum\nolimits_{j\in\bar{\mathcal{N}}_i}(\hat{v}_j-\hat{v}_i)\right),\label{generalconrolc}
\end{align}\end{subequations}
where $\hat{v}_{N+1}=v$, and for $i=1\dots,N$, $\hat{v}_i\in \mathds{R}^{m}$ and $\hat{\omega}_i\in \mathds{R}^{l}$ are the estimation of $v$ and $\omega$, respectively; $\textbf{k}_i(\cdot)$, $\textbf{g}_i(\cdot)$ and $\textbf{p}_i(\cdot)$ are sufficiently smooth functions vanishing at the origin.

Now we are ready to formulate the knowledge-based leader-following synchronization (KLFS) problem for a class of nonlinear multi-agent systems.

\begin{prob}[KLFS Problem]\label{problFS}
Consider the multi-agent system (\ref{followerdynamic}) and (\ref{leader}), and the corresponding communication graph $\mathcal{\bar{G}}$. Given any compact set $\mathds{V}_0\subset \mathds{R}^m$ containing the origin, design a distributed control law in the form of \eqref{generalconrol}, such that for any initial condition $x_i(0)\in\mathds{R}^{n_i}$, $\hat{v}_i(0)\in\mathds{R}^{m}$, $\hat{\omega}_i(0)\in\mathds{R}^{l}$, and $v(0)\in \mathds{V}_0$,
$x_i(t)$ is bounded for all $t\geq0$, and
$$\lim\limits_{t\rightarrow\infty}\left(y_i\left(t\right)-q_{N+1}\left(t\right)\right)=0, \quad i=1,2,\dots,N.$$
\end{prob}

Note that for the special case that  $v$ and $\omega$ are known to all followers, the control law \eqref{generalconrol} will reduce to the following form
\begin{align}\label{contrl}
\tau_i=\textbf{k}_i(x_i,{v},\omega).
\end{align}
Then the KLFS problem will reduce to the traditional adaptive nonlinear output regulation problem, which has been well studied in \cite{huang2004nonlinear} and \cite{isidori2014robust}.

In practice, the leader's dynamic knowledge (i.e., parameters) may be unknown to all followers, and only a few followers who have direct links to the leader can sense the leader's state or output information.
Then, in order to solve the KLFS problem, we need to design a learning-based distributed observer for each follower node to estimate the state and simultaneously learn the parameters of the leader node. The learning-based distributed observer, which is the main kernel in solving Problem \ref{problFS}, is defined as follows.
%

\begin{defi}[Learning-based distributed
observer] A dynamic of the form \eqref{generalconrolb} and \eqref{generalconrolc} is called a learning-based distributed observer for the leader \eqref{leader} if, given a digraph $\mathcal{\bar{G}}$ and any compact set $\mathds{V}_0\subset \mathds{R}^m$ containing the origin, there exit global defined and sufficiently smooth functions $\textbf{g}_i(\cdot)$ and $\textbf{p}_i(\cdot)$, such that for any initial condition $\hat{\omega}_i(0)\in\mathds{R}^{l}$, $\hat{v}_i(0)\in\mathds{R}^{m}$ and $v(0)\in \mathds{V}_0$,
$\hat{v}_i(t)$ is bounded for all $t\geq0$, and
$$\lim\limits_{t\rightarrow\infty}\left(\hat{v}_i\left(t\right)-v\left(t\right)\right)=0.$$
 Further, under certain conditions, the observer \eqref{generalconrolc} can adaptively learn the actual values of the knowledge-based leader in the sense that
$$\lim\limits_{t\rightarrow\infty}\left(\hat{\omega}_i\left(t\right)-\omega\right)=0.$$
\end{defi}

Before proceeding, we recall the definition of persistent excitation {\myr and two useful lemmas, which will be used in the sequel.}

\begin{defi}\citep*{Anderson} A bounded piecewise continuous function $f: [0,+\infty)\mapsto \mathds{R}^{n\times m}$ is said to be persistently exciting if there exist positive constants $\epsilon$, $t_0$, and $T_0$ such that,
$$\frac{1}{T_0}\int^{t+T_0}_{t} f(s) f^T (s) ds\geq\epsilon I_n,~~~~\forall t\geq t_0.$$
\end{defi}
%


{\myr \begin{lem}\citep*{chen2015stabilization}\label{chen2015stabilization1}
\textbf{(i)} For any continuous function $f(x,d):\mathds{R}^n\times \mathds{R}^l\mapsto \mathds{R}$, there exist smooth functions $a(x), b(d)\geq 0$, such that $|f(x,d)|\leq a(x)b(d)$.\\
\textbf{(ii)} For any continuous function $f(x,d):\mathds{R}^n\times \mathds{R}^l\mapsto \mathds{R}$ satisfying $f(0,d)=0$, there exist smooth functions $m(x,d)\geq 0$, such that $|f(x,d)|\leq m(x,d)\|x\|$.
\end{lem}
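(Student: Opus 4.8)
The two parts are established in sequence, part (i) carrying the main content and part (ii) following by a Hadamard-type reduction to it. For part (i), the plan is first to trap $|f(x,d)|$ between a product of monotone ``radial'' functions, and only afterwards replace those by smooth functions. Concretely, I would introduce $F(r):=2+\max_{\|x\|\le r,\ \|d\|\le r}|f(x,d)|$ for $r\ge 0$, which is finite for every $r$ (continuity of $f$ plus compactness of closed balls), nondecreasing, and satisfies $F\ge 2$. Then, for any $x\in\mathds{R}^n$ and $d\in\mathds{R}^l$, using only monotonicity and nonnegativity of $F$,
\begin{align*}
|f(x,d)| &\le F(\max\{\|x\|,\|d\|\}) = \max\{F(\|x\|),F(\|d\|)\}\\
&\le F(\|x\|)+F(\|d\|)\le F(\|x\|)\,F(\|d\|),
\end{align*}
the last inequality because $F\ge 2$ gives $(F(\|x\|)-1)(F(\|d\|)-1)\ge 1$.

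It then remains only to dominate $r\mapsto F(r)$ by a smooth function of $x$. For this I would use the standard fact that every nondecreasing, locally bounded $\beta:[0,\infty)\to[0,\infty)$ has a smooth nondecreasing majorant $\hat\beta$ --- for instance $\hat\beta(t)=\int_{-1}^{1}\beta(t+1+u)\varphi(u)\,du$ with $\varphi$ a mollifier supported in $[-1,1]$, where smoothness comes from differentiation under the integral and the inequalities $\hat\beta\ge\beta$ and $\hat\beta$ nondecreasing follow from monotonicity of $\beta$. Applying this to $\beta(t):=F(\sqrt{t})$ yields a smooth nondecreasing $\hat F\ge\beta$; setting $a(x):=\hat F(\|x\|^2)$ and $b(d):=\hat F(\|d\|^2)$ gives $C^\infty$ functions (compositions of the polynomial $\|\cdot\|^2$ with $\hat F$) that are nonnegative and satisfy $a(x)\ge F(\|x\|)$ and $b(d)\ge F(\|d\|)$, hence $|f(x,d)|\le a(x)b(d)$, which proves (i).

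For part (ii), the plan is to differentiate $f$ along the segment from the origin to $x$. Since $f(0,d)=0$ and $f$ is differentiable in $x$ (as guaranteed by the smoothness assumed throughout the paper),
\begin{align*}
f(x,d) &= f(x,d)-f(0,d) = \int_0^1\frac{d}{ds}f(sx,d)\,ds\\
&= \left(\int_0^1\nabla_x f(sx,d)\,ds\right)x,
\end{align*}
so $|f(x,d)|\le\tilde m(x,d)\,\|x\|$ where $\tilde m(x,d):=\left\|\int_0^1\nabla_x f(sx,d)\,ds\right\|$ is continuous in $(x,d)$. Applying part (i) to $\tilde m$ produces smooth $a(x),b(d)\ge 0$ with $\tilde m(x,d)\le a(x)b(d)$, and then $m(x,d):=a(x)b(d)$ is smooth, nonnegative, and satisfies $|f(x,d)|\le m(x,d)\,\|x\|$.

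The step I expect to require the most care is the smoothing in part (i): turning a bound that is only continuous --- indeed, after the reduction, only monotone --- into a genuinely $C^\infty$ bound while preserving the separated multiplicative form $a(x)b(d)$; the mollify-then-compose-with-$\|\cdot\|^2$ device above is what makes it go through. A secondary point worth flagging is that part (ii) really does use differentiability of $f$ in $x$ near the origin --- continuity alone would not suffice, since e.g. for $f(x,d)=\|x\|^{1/2}$ any admissible $m$ would have to be unbounded at $x=0$ --- but since all functions in the present paper are sufficiently smooth, this costs nothing. The remaining verifications are routine.
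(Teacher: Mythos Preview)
The paper does not prove this lemma; it is quoted verbatim from \cite{chen2015stabilization} and used as a black box in the proof of Lemma~\ref{lemmarho}. So there is no in-paper argument to compare against.

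On the merits, your proof of part (i) is correct and is essentially the standard one: bound by a radial, monotone function of $\max\{\|x\|,\|d\|\}$, split the max into a product using $F\ge 2$, and then smooth each factor by mollifying $t\mapsto F(\sqrt{t})$ and composing with the polynomial $\|\cdot\|^2$. All steps are sound.

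Your treatment of part (ii) is also correct, and you have put your finger on a genuine issue with the lemma as stated: for a merely continuous $f$ with $f(0,d)=0$ the conclusion can fail (your example $f(x,d)=\|x\|^{1/2}$ shows no bounded, let alone smooth, $m$ can work near the origin). The Hadamard-type integral you use requires $f$ to be $C^1$ in $x$, and the original source indeed states the result under a differentiability hypothesis; the paper's transcription has simply dropped it. Since every place the paper invokes part (ii) --- namely to the function $f_i(\tilde v_i,v,\omega)=\|p(\tilde v_i+v,\omega)-p(v,\omega)\|$ built from the sufficiently smooth $p$ --- that hypothesis is met, your caveat is exactly the right one and nothing in the paper is affected.
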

\begin{lem}\citep*{huang2004nonlinear}\label{huang2004nonlinear1} Let $f:\mathds{R}^m\times \mathds{R}^n\times \mathds{R}^p\mapsto \mathds{R}$ be a $C^1$ function satisfying $f(0,0,d)=0$ for all $d\in \mathds{D}$, with $\mathds{D}$ being a compact set of $\mathds{R}^p$. Then, there exist smooth functions $F_1:\mathds{R}^m\mapsto \mathds{R}$ and $F_2:\mathds{R}^n\mapsto \mathds{R}$ satisfying $F_1(0)=0$ and $F_2(0)=0$ such that
$$|f(x,y,d)|\leq F_1(x)+F_2(y),\forall x\in \mathds{R}^{m}, y\in \mathds{R}^{n}, d\in \mathds{D}.$$
\end{lem}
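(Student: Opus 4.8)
The assertion is the classical ``separation'' (or ``splitting'') lemma that underlies nonlinear output regulation, and the plan is to prove it in three stages: (i) use an integral, Hadamard-type representation of $f$ to extract the factors that make it vanish at $x=0$ and at $y=0$; (ii) use compactness of $\mathds{D}$ to absorb the dependence on the parameter $d$ into finite constants; and (iii) invoke Lemma~\ref{chen2015stabilization1} together with an elementary monotone-envelope estimate to split the remaining joint dependence on $x$ and $y$ into a function of $x$ alone plus a function of $y$ alone.

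Concretely, since $f\in C^1$ and $f(0,0,d)=0$, I would write, for all $(x,y,d)$,
$$f(x,y,d)=\big[f(x,0,d)-f(0,0,d)\big]+\big[f(x,y,d)-f(x,0,d)\big]=g_1(x,d)\,x+g_2(x,y,d)\,y,$$
where $g_1(x,d)=\int_0^1\partial_x f(sx,0,d)\,ds$ and $g_2(x,y,d)=\int_0^1\partial_y f(x,sy,d)\,ds$ are continuous (being parameter integrals of the continuous maps $\partial_x f,\partial_y f$), so that $|f(x,y,d)|\le\|g_1(x,d)\|\,\|x\|+\|g_2(x,y,d)\|\,\|y\|$. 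For the first summand, Lemma~\ref{chen2015stabilization1}(i) applied to the continuous scalar map $(x,d)\mapsto\|g_1(x,d)\|$ gives smooth $a_1(x),b_1(d)\ge0$ with $\|g_1(x,d)\|\le a_1(x)b_1(d)$; by compactness $\beta_1:=\max_{d\in\mathds{D}}b_1(d)<\infty$, so this summand is at most $\beta_1 a_1(x)\|x\|$, a function of $x$ only. The second summand is the crux, since $g_2$ genuinely depends on both $x$ and $y$: here I would use Lemma~\ref{chen2015stabilization1}(i) again (with $(x,y)$ as ``state'' and $d$ as ``parameter'') and compactness of $\mathds{D}$ to get $\|g_2(x,y,d)\|\le\gamma(x,y)$ with $\gamma\ge0$ smooth, then over-bound $\gamma(x,y)\le\alpha(x)+\beta(y)$ by passing to the monotone envelope $\bar\gamma(r)=\max_{\|(x,y)\|\le r}\gamma(x,y)$ and using $\bar\gamma(\|x\|+\|y\|)\le\bar\gamma(2\|x\|)+\bar\gamma(2\|y\|)$, so that $\gamma(x,y)\|y\|\le\alpha(x)\|y\|+\beta(y)\|y\|\le\tfrac12\alpha(x)^2+\tfrac12\|y\|^2+\beta(y)\|y\|$ is separated. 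Collecting the $x$-dependent terms into $F_1$ and the $y$-dependent terms into $F_2$ (after Young's-inequality steps such as $\beta_1 a_1(x)\|x\|\le\tfrac12\beta_1^2 a_1(x)^2+\tfrac12\|x\|^2$) yields smooth $F_1,F_2$ with $|f(x,y,d)|\le F_1(x)+F_2(y)$; the vanishing at the origin is recovered by carrying the weights $\|x\|,\|y\|$ along, i.e. by proving the equivalent weighted form $|f(x,y,d)|\le\hat F_1(x)\|x\|+\hat F_2(y)\|y\|$ and then taking $F_1(x)=\hat F_1(x)\|x\|$, $F_2(y)=\hat F_2(y)\|y\|$.

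The one genuine obstacle is stage (iii): the Hadamard decomposition unavoidably leaves a coefficient ($g_2$) depending on both arguments, and converting $\|g_2(x,y,d)\|\,\|y\|$ into a sum of a function of $x$ and a function of $y$ is exactly what the combination of Lemma~\ref{chen2015stabilization1}(i), compactness of $\mathds{D}$, and the monotone-envelope/Young estimates is designed to accomplish. The remaining ingredients --- forming the parameter integrals, maximizing a continuous function over the compact set $\mathds{D}$, and the final bookkeeping --- are routine.
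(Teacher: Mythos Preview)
The paper does not prove this lemma; it is quoted from the cited reference and used as a black box, so there is no proof in the paper to compare your attempt against. Your overall strategy---Hadamard representation to extract the vanishing factors, compactness of $\mathds{D}$ to eliminate the $d$-dependence, then a radial envelope plus Young's inequality to separate variables---is indeed the standard route. Two points in your execution, however, are genuine gaps rather than routine bookkeeping.

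The first is the condition $F_1(0)=0$. After Young's inequality you are left with the term $\tfrac12\alpha(x)^2$ where $\alpha(x)=\bar\gamma(2\|x\|)$, and $\alpha(0)=\bar\gamma(0)=\gamma(0,0)$ has no reason to vanish: Lemma~\ref{chen2015stabilization1}(i) gives no control on the value of the majorant at the origin. Your proposed remedy---passing to the ``equivalent weighted form'' $|f|\le\hat F_1(x)\|x\|+\hat F_2(y)\|y\|$---is not actually delivered by the steps you wrote, because the cross term $\alpha(x)\|y\|$ is neither of the shape $(\text{function of }x)\cdot\|x\|$ nor $(\text{function of }y)\cdot\|y\|$. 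The repair is easy but must be made explicit: write $\alpha(x)\|y\|=[\alpha(x)-\alpha(0)]\|y\|+\alpha(0)\|y\|$, absorb $\alpha(0)\|y\|$ into $F_2$, and only then apply Young to the centered piece, whose square does vanish at $x=0$.

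The second is smoothness. The envelope $\bar\gamma$ is merely continuous (it is a pointwise supremum), and $\|\cdot\|$ is not smooth at the origin, so neither $\bar\gamma(2\|x\|)$ nor your final $F_1(x)=\hat F_1(x)\|x\|$ is smooth as written. You need one more standard step---either dominate the continuous radial envelope by a smooth nondecreasing one evaluated at $\|x\|^2$, or invoke the fact that any nonnegative continuous function vanishing at a point admits a smooth majorant vanishing at that same point---before you can claim $F_1,F_2$ are smooth.
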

}

\section{Learning-based distributed observer design}\label{section2}
 This section devotes to the design and analysis of the learning-based distributed observers. Then in Section \ref{section4}, we will apply this observer to solve a leader-following synchronization problem of multiple Euler-Largrange systems.

\subsection{Learning-based distributed observers design}
Note that the parameter vector $\omega$ in the leader system \eqref{leader} is constant, and thus bounded.
To estimate both states and parameters of the leader node, we design the following learning-based distributed observer for node $i$,
\begin{subequations}\label{compensator2}
\begin{align}
\dot{\hat{v}}_i &=p(\hat{v}_i,\hat{\omega}_i)  + \hat{\kappa}_i\rho_i\left( z_i\right) z_i, \label{compensator2b}\\
\dot{\hat{\omega}}_{i} &=\mu\phi^T\left(\hat{v}_i\right) z_i, \label{compensator2c}\\
\dot{\hat{\kappa}}_i &= \rho_i\left( z_i\right) z_i^T z_i, \label{compensator2d}\\
 z_i &= \sum \nolimits_{j \in \mathcal{\bar{N}}_i} (\hat{v}_j-\hat{v}_i),\;i=1,\dots,N,\label{ndiff}
\end{align}
\end{subequations}%
where $\hat{v}_{N+1}=v$, $\hat{v}_i\in \mathds{R}^m$ is the estimation of $v$, $\hat{\omega}_i\in \mathds{R}^{l}$ is the estimation of $\omega$, $\hat{\kappa}_i\in \mathds{R}$ is a dynamic gain to {\meng relax the dependence on an} unknown bound $\bar{\kappa}_i$ induced by the matrix $H$, uniform bound of state $v$ and unknown parameter $\omega$. {\meng Here}
$$p(\hat{v}_i,\hat{\omega}_i)=\phi \left( \hat{v}_i \right)\hat{\omega}_i,$$
where $\phi \left( \cdot \right) \in \mathds{R}^{m \times l}$ is defined in \eqref{pphi}, $\rho_i\left(\cdot\right)\geq 1$ is a smooth positive function to be designed and {\myr $\mu$ can be selected as any positive scalar}.

\begin{rem}
A salient feature of the observer \eqref{compensator2} is its fully distributed nature, and only those follower nodes who have direct links from the leader need to get access to the leader's state information, instead of its parameters $\omega$.
%
%
Closely related results can be found in \cite{wangslotine2004,wang2006theoretical} and \cite{zhou2006adaptive}. However, they use a fixed linear gain to stabilize the coupled nonlinear dynamic networks under assumptions that each coupled nonlinear agent must satisfy the global bounded Jacobian matrix assumption or the quadratic condition proposed in \cite{lu2004synchronization}.

Moreover, the observer design \eqref{compensator2} is fully distributed in the sense that the coupled global information $\bar{\kappa}$ induced by the communication graph and knowledge-based leader system has been completely removed by using an adaptive technique, i.e., equation \eqref{compensator2d}. Note that this technique is different from methods proposed in \cite{li2014designing} and \cite{sun2021distributed}, {\meng in which the existence of time-varying coupling weights is based on the assumption of the global Lipschitz condition and the quadratic condition, respectively. However, such
conditions are not imposed in this paper. Therefore, the methods in \cite{li2014designing} and \cite{sun2021distributed} are not applicable in our case.}
\end{rem}

\begin{rem}
 The choice of $\rho_i\left( \cdot\right)$ depends on the structure of $p(\cdot,\omega)$, where $\omega$ is unknown.
 For a class of polynomial differential systems \eqref{leader}, with $p(v,\omega)$ being real polynomials of degree $m_0\geq1$,
$$\|p(v,\omega)\|^2\leq\sum\nolimits_{k=0}^{2m_0}c_k(\omega)\|v\|^{k}$$
where $c_k(\omega)$ is some unknown constant induced by the uncertain parameter $\omega$. Then
 $\rho_i\left( z_i\right)$ {\meng can be chosen as}
$$\rho_i(z_i) = a_i \sum\nolimits_{k=0}^{2m_0-2}\|z_i\|^{k} +b_i,$$
where $a_i$ and $b_i$ are \added{\myr any} positive numbers. 
\end{rem}

\subsection{Convergence analysis}
Define the estimation errors as
$\tilde{{v}}_i=\hat{v}_i-v$, $\tilde{\omega}_i=\hat{\omega}_i-\omega$, and $\tilde{\kappa}_i=\hat{\kappa}_i-\bar{\kappa}_i$, respectively. Their time derivatives  along the trajectories \eqref{leader} and \eqref{compensator2} are
\begin{subequations}\label{reeq12}
\begin{align}
  \dot{\tilde{{v}}}_i =& \bar{\kappa}_i \rho_i(z_{i}) z_{i}+\phi\left(\hat{v}_i\right)\tilde{\omega}_i\nonumber \\
 & + \tilde{\kappa}_i \rho_i(z_{i}) z_{i}+\left(\phi\left(\hat{v}_i\right)-\phi\left(v\right)\right)\omega,\label{reeq12a}\\
\dot{\tilde{\omega}}_{i} = &\mu\phi^T\left(\hat{v}_i\right)z_{i},\label{reeq12b}\\
\dot{\tilde{\kappa}}_i =& \rho_i\left( z_i\right) z_i^T z_i,\;i=1,\dots,N. \label{compensator3c}
\end{align}%
\end{subequations}%
Define $\tilde{{v}}=\col\left(\tilde{{v}}_1,\dots,\tilde{{v}}_N\right)$, $\hat{v} =\col\left(\hat{v}_1,\dots,\hat{v}_N\right)$, $\tilde{{\kappa}}_d=\mbox{blkdiag}\left(\tilde{{\kappa}}_1,\dots,\tilde{{\kappa}}_N\right)$, $\phi_d (\hat{v})=\mbox{blkdiag}\left(\phi\left(\hat{v}_1\right),\dots,\phi\left(\hat{v}_N\right)\right)$,  $\bar{{\kappa}}_d=\mbox{blkdiag}\left(\bar{{\kappa}}_1,\dots,\bar{{\kappa}}_N\right)$, $\rho_d=\mbox{blkdiag}\left(\rho_1,\dots,\rho_N\right)$, and
$z=\col(z_{1},\dots,z_{N})$.
%
Then, we have
\begin{equation} \label{eqev}
z =-(H\otimes I_m)\tilde{{v}},
\end{equation}
where $H$  is defined in Section~\ref{sectionA}.
Equations \eqref{reeq12a} and \eqref{reeq12b} can be rewritten in the following compact form
\begin{subequations}\label{compensator3}
\begin{align}
  \dot{\tilde{{v}}} =&( \bar{\kappa}_d\rho_d\otimes I_m) z +\phi_d (\hat{v}) \tilde{\omega}+(\tilde{\kappa}_d\rho_d\otimes I_m) z\nonumber\\
  &+\left[\phi_d (\hat{v})-I_N\otimes\phi\left(v\right)\right](\mathds{1}_N\otimes \omega), \label{compensator3a}\\
\dot{\tilde{\omega}} =&\mu\phi_d^T\left(\hat{v}\right)z.\label{compensator3b}
\end{align}
\end{subequations}
where $\mathds{1}_N\in2 \mathds{R}^N$ and $I_m \in \mathds{R}^{m\times m}$ denote the vector of all
ones and the identity matrix, respectively

Before analyzing the convergence of the proposed learning-based distributed observer, we first establish the following result.

\begin{lem}\label{lemmarho} Consider systems \eqref{leader} and \eqref{compensator2}.
\begin{enumerate}
  \item
There exit sufficiently smooth positive function $\gamma_i(z_i) $ and positive constants $\beta_M$ and $\lambda_M$ such that
\begin{align}\sum\limits_{i=1}^{N}\| p(\hat{v}_i,\omega)-p(v,\omega)\|^2 \leq \beta_M^2 \lambda_M\sum\limits_{i=1}^{N}{\gamma}_i(z_{i})\|z_{i}\|^2.\label{sprocedure_1}\end{align}
\item Moreover, if {system \eqref{leader}} is a class of polynomial differential system with degree $m_0$, then
$${\gamma}_i(z_{i})=\sum\nolimits_{k=0}^{2m_0-2}\|z_i\|^{k}.$$
\end{enumerate}
\end{lem}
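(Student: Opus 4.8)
The plan is to estimate the left‑hand side node by node, factor the increment $p(\hat v_i,\omega)-p(v,\omega)$ through the observation error $\tilde v_i=\hat v_i-v$, remove the dependence on $v$ by using that the leader trajectory stays in the compact set $\mathds{V}$, and finally trade a bound in $\tilde v_i$ for one in $z_i$ via the linear relation $\tilde v=-(H\otimes I_m)^{-1}z$ implied by \eqref{eqev} together with the positive definiteness of $H$ (Section~\ref{sectionA}). The existence of the constants $\beta_M,\lambda_M$ is all that is required, so every ``$v$‑dependent'' or ``$\omega$‑dependent'' quantity that is bounded on a compact set may simply be absorbed into them.

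\textbf{Factoring out $\tilde v_i$ and discarding $v$.} By \eqref{pphi}, $p(\hat v_i,\omega)-p(v,\omega)=\bigl(\phi(v+\tilde v_i)-\phi(v)\bigr)\omega$. The entrywise map $(\tilde v_i,v)\mapsto\phi(v+\tilde v_i)-\phi(v)$ is smooth and vanishes at $\tilde v_i=0$, so Hadamard's lemma (equivalently Lemma~\ref{chen2015stabilization1}(ii)) gives a continuous $N_i(\tilde v_i,v)\ge 0$ with $\|\phi(\hat v_i)-\phi(v)\|^{2}\le N_i(\tilde v_i,v)\|\tilde v_i\|^{2}$, hence $\|p(\hat v_i,\omega)-p(v,\omega)\|^{2}\le\|\omega\|^{2}N_i(\tilde v_i,v)\|\tilde v_i\|^{2}$. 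Applying Lemma~\ref{chen2015stabilization1}(i) to $N_i$ yields smooth $a_i,b_i\ge0$ with $N_i(\tilde v_i,v)\le a_i(\tilde v_i)b_i(v)$; since $v(t)\in\mathds{V}$ for all $t\ge0$ we may replace $b_i(v)$ by $\bar b_i:=\max_{v\in\mathds{V}}b_i(v)<\infty$, and over‑bound $a_i$ by a smooth nondecreasing function of $\tilde v_i^{T}\tilde v_i$, say $a_i(\tilde v_i)\le\alpha_i(\tilde v_i^{T}\tilde v_i)$. Thus $\|p(\hat v_i,\omega)-p(v,\omega)\|^{2}\le\|\omega\|^{2}\bar b_i\,\alpha_i(\tilde v_i^{T}\tilde v_i)\,\|\tilde v_i\|^{2}$.

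\textbf{From $\tilde v_i$ to $z_i$ — the main obstacle.} Here $\tilde v_i$ is a \emph{global} function of all the $z_j$'s, whereas the target bound $\sum_i\gamma_i(z_i)\|z_i\|^{2}$ is decoupled. Using $\|\tilde v_i\|\le\|\tilde v\|\le\|H^{-1}\|\,\|z\|$, monotonicity of $\alpha_i$, and $\|z\|^{2}=\sum_j\|z_j\|^{2}\le N\max_j\|z_j\|^{2}$, each node term is bounded by $\alpha_i\!\bigl(N\|H^{-1}\|^{2}\max_j\|z_j\|^{2}\bigr)\,\|H^{-1}\|^{2}N\max_j\|z_j\|^{2}$; summing over $i$, writing $\bar\alpha:=\sum_i\alpha_i$ (smooth, nondecreasing), and bounding the $\max_j$ of nonnegative quantities by the corresponding sum, one gets $\sum_i\|p(\hat v_i,\omega)-p(v,\omega)\|^{2}\le\bigl(\|\omega\|^{2}N^{2}\|H^{-1}\|^{2}\max_i\bar b_i\bigr)\sum_j\gamma(z_j)\|z_j\|^{2}$ with $\gamma(z_j):=\bar\alpha\bigl(N\|H^{-1}\|^{2}z_j^{T}z_j\bigr)+1$, which is smooth and positive. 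Taking $\gamma_i=\gamma$ and absorbing the constant into $\beta_M^{2}\lambda_M$ proves part~1. I expect this reorganization to be the delicate point, since the estimate naturally lives on the stacked vectors $\tilde v,z$, yet the per‑node form $\sum_i\gamma_i(z_i)\|z_i\|^{2}$ is precisely what is needed later so that the adaptive term $\hat\kappa_i\rho_i(z_i)z_i$ in \eqref{compensator2b} can dominate it in the Lyapunov analysis.

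\textbf{The polynomial case.} When $p(v,\omega)$ has degree $m_0$, each entry of $\phi$ is a polynomial in $v$ of degree at most $m_0$, so $\phi(v+\tilde v_i)-\phi(v)$ is a polynomial in $\tilde v_i$ of degree at most $m_0$ with no constant term and with coefficients polynomial in $v$, hence bounded on $\mathds{V}$. Squaring and using $\|\tilde v_i\|^{2j+1}\le\tfrac12(\|\tilde v_i\|^{2j}+\|\tilde v_i\|^{2j+2})$ to absorb odd powers gives $\|p(\hat v_i,\omega)-p(v,\omega)\|^{2}\le C\,\|\tilde v_i\|^{2}\sum_{k=0}^{2m_0-2}\|\tilde v_i\|^{k}$ for a constant $C$ collecting $\|\omega\|$ and the coefficient bounds. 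The previous step then runs verbatim with the role of $\alpha_i(\tilde v_i^{T}\tilde v_i)$ played by $\sum_{k=0}^{2m_0-2}\|\tilde v_i\|^{k}$: since $\sum_i\|\tilde v_i\|^{k+2}\le N\|H^{-1}\|^{k+2}\|z\|^{k+2}\le N^{1+k/2}\|H^{-1}\|^{k+2}\sum_j\|z_j\|^{k}\|z_j\|^{2}$, one obtains $\sum_i\|p(\hat v_i,\omega)-p(v,\omega)\|^{2}\le\beta_M^{2}\lambda_M\sum_{i=1}^{N}\bigl(\sum_{k=0}^{2m_0-2}\|z_i\|^{k}\bigr)\|z_i\|^{2}$, i.e.\ $\gamma_i(z_i)=\sum_{k=0}^{2m_0-2}\|z_i\|^{k}$, as claimed (if genuine $C^{\infty}$ smoothness is desired, dominate the odd powers of $\|z_i\|$ by the adjacent even ones).
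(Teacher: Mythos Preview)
Your argument is correct and follows the same overall scheme as the paper: factor the increment through $\tilde v_i$ via Lemma~\ref{chen2015stabilization1}(ii), separate the $(\tilde v_i)$- and $(v,\omega)$-dependence via Lemma~\ref{chen2015stabilization1}(i), absorb the latter into a constant using $v(t)\in\mathds V$, and then pass from $\tilde v$ to $z$ through $\tilde v=-(H^{-1}\otimes I_m)z$; for Part~2 both you and the paper use the same polynomial degree count together with the $\ell_2$/$\ell_p$ norm comparison $\|z\|^{k}\le N^{k/2-1}\sum_i\|z_i\|^{k}$.

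The one methodological difference is in the decoupling step. The paper treats the global bound $\tilde g(\tilde v)=g(z)$ as a smooth function with $g(0)=0$ and $\nabla g(0)=0$ and invokes Lemma~\ref{huang2004nonlinear1} as a black box to obtain the per-node splitting $g(z)\le\beta_M^2\lambda_M\sum_i\gamma_i(z_i)\|z_i\|^2$. You instead construct the splitting by hand: over-bound each $a_i$ by a smooth nondecreasing radial function $\alpha_i(\tilde v_i^{T}\tilde v_i)$, push everything through $\|\tilde v_i\|\le\|H^{-1}\|\,\|z\|$ and $\|z\|^2\le N\max_j\|z_j\|^2$, and then replace the maximum by the sum. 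Your route is more elementary and makes the resulting $\gamma_i$ explicit, while the paper's is shorter but hides the structure inside Lemma~\ref{huang2004nonlinear1}; both yield exactly what the Lyapunov analysis downstream needs. Your closing remark that the odd powers $\|z_i\|^{k}$ in the stated $\gamma_i$ are not $C^1$ at the origin (and can be dominated by adjacent even powers if genuine smoothness is required) is a valid observation that the paper does not address.
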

\begin{proof} \textbf{1):} Define
\begin{align} f_i(\tilde{v}_i,v,\omega)=\| p(\tilde{v}_i+v,\omega)-p(v,\omega)\|,   \label{specur2}\end{align} which is a continuously differentiable function satisfying $f_i(0,v,\omega)=0$, for $i=1,\dots,N$.
{\myr By the second part of Lemma \ref{chen2015stabilization1}}, there exists a smooth function $m_i(\tilde{{v}}_i(t),v(t),\omega)\geq 0$ such that%
\begin{align}
    f_i(\tilde{v}_i(t),v(t),\omega)\leq m_i(\tilde{{v}}_i(t),v(t),\omega)\|\tilde{{v}}_i(t)\|,~~~~ \forall t \geq 0.\label{specur22}
\end{align}
For the continuous function $m_i(\tilde{{v}}_i(t),v(t),\omega)$, {\myr by the first part of Lemma~\ref{chen2015stabilization1}}, there exist
smooth functions $\alpha_i(\tilde{{v}}_i(t))\geq 0$ and $\beta_i(v(t),\omega)\geq 0$, such that
\begin{align}m_i(\tilde{{v}}_i(t),v(t),\omega)\leq {\alpha}_i(\tilde{{v}}_i(t))\times{\beta}_i(v(t),\omega),~~ \forall t \geq 0.\label{specur222}\end{align}
Since $v(t)$ is {\myr uniformly} bounded in $t$, and $\omega$ is an unknown constant vector, there exists a positive constant $\beta_M=\sup_{t\geq 0}{\beta}_i(v(t),\omega)$ such that ${\beta}_i(v(t),\omega) \leq \beta_M$, for all $t\geq 0$.
{\myr Define
\begin{align}
F(\tilde{v},v,\omega)\equiv &\sum\nolimits_{i=1}^{N} f_i^2(\tilde{v}_i,v,\omega)\label{specur2222}
\end{align}
Then, from equations \eqref{specur2}, \eqref{specur22}, \eqref{specur222} and \eqref{specur2222}, we have
\begin{align*}
F(\tilde{v},v,\omega) = &\sum\nolimits_{i=1}^{N} f_i^2(\tilde{v}_i,v,\omega)\\
\leq &\sum\nolimits_{i=1}^{N} m_i^2(\tilde{{v}}_i(t),v(t),\omega)\|\tilde{{v}}_i(t)\|^2\\
\leq &\sum\nolimits_{i=1}^{N}   {\alpha}_i^2(\tilde{{v}}_i(t))\times{\beta}_i^2(v(t),\omega)\|\tilde{{v}}_i(t)\|^2\\
\leq&\sum\nolimits_{i=1}^{N} \beta_M^2{\alpha}_i^2(\tilde{{v}}_i)\|\tilde{{v}}_i\|^2\equiv \tilde{g}(\tilde{{v}}).
\end{align*}
As $\tilde{{v}}(z)$ is a function of $z$ via the relationship
\begin{align*}\tilde{v}=-(H^{-1}\otimes I_m)z.
\end{align*}
Therefore,
$$\tilde{g}(\tilde{{v}})=\tilde{g}(-(H^{-1}\otimes I_m)z)\equiv {g}(z)$$
Since the smooth positive definite function $ g(\cdot)$ and its first derivative vanishes at the origin, by Lemma~\ref{huang2004nonlinear1}, there exist smooth functions $\gamma_i(z_{i})$ and  some positive constant $\lambda_M$, such that%
   \begin{align*}
   F(\tilde{v},v,\omega) \leq& g(z) \\
    \leq &\beta_M^2 \lambda_M\sum\nolimits_{i=1}^{N}{\gamma}_i(z_{i})\|z_{i}\|^2.
\end{align*}}
\textbf{2):} For $l=1,\dots,m$, $p_l(\cdot,\omega): \mathds{R}^{m} \rightarrow \mathds{R}$ are continuously differentiable functions and $p_l(\cdot,\omega)$ are a class of polynomial differential systems with the largest degree $m_0$. It is noted that $\tilde{v}_i=\hat{v}_i-v$, for $i=1,\dots,N$. Then, for any $\hat{v}_i$ and $v$, we have
\begin{align*}
    p_l(\hat{v}_i,\omega)-p_l(v,\omega)
     =&\Big[\int_{0}^{1}\frac{\partial p_l(x,\omega)}{\partial x}\Big|_{x=v+\vartheta_l\tilde{v}_i}d\vartheta_l \Big]\tilde{v}_i.
\end{align*}
Then, for $l=1,\dots,m$, we will have
$$\|p_l(\hat{v}_i,\omega)-p_l(v,\omega)\|\leq \Big\|\int_{0}^{1}\frac{\partial p_l(x,\omega)}{\partial x}\Big|_{x=v+\vartheta_l\tilde{v}_i}d\vartheta_l \Big\|\|\tilde{v}_i\|.$$
In addition, $p_l(x,\omega)$ is in real polynomials form with the largest degree $m_0$. Thus $$ \sum_{l=1}^{m}\Big\|\int_{0}^{1}\frac{\partial p_l(x,\omega)}{\partial x}\Big|_{x=v+\vartheta_l \tilde{v}_i}d\vartheta_l \Big\|\leq \sum\nolimits_{k=0}^{m_0-1}{\beta}_k(\omega,v)\|\tilde{v}_i\|^{k},$$
where $\beta_k(\omega,v(t))$ are some unknown positive smooth functions induced by the uncertain parameter $\omega$ and unknown state $v(t)$. Since $v(t)$ is uniform bounded in $t$, and $\omega$ is an unknown constant vector, there exists a positive constant $\beta_M=\sup_{t\geq 0}{\beta}_k(v(t),\omega)$ such that ${\beta}_k(v(t),\omega) \leq {\beta}_M$, for all $t\geq 0$.
Hence, we have
\begin{align*}
\|p(v,\omega)-p(\hat{v}_i,\omega)\|^2\leq  {\beta}_{M}^2\sum\nolimits_{k=0}^{2m_0-2}\|\tilde{v}_i\|^{k+2}.
\end{align*}
It is noted that
\begin{align}
 \|\tilde{v}\|=\|(H^{-1}\otimes I_m)z\|\leq\bar{h}\|z\|,\nonumber
\end{align}
where $\bar{h}=\|H^{-1}\|$, and  $$\|x\|_p\leq \|x\|\leq n^{\frac{1}{2}-\frac{1}{p}}\|x\|_p$$ for any $p\geq2$ and $x\in \mathds{R}^{n}$, which further implies
$$\|x\|_{p}^p\leq \|x\|^p\leq n^{\frac{p}{2}-1}\|x\|_p^{p}.$$
 Then, for any $k\geq2$, we have
\begin{align}
\sum\nolimits_{i=1}^{N}\|\tilde{v}_i\|^k\leq&\big(\sum\nolimits_{i=1}^{N}\|\tilde{v}_i\|^2\big)^{k/2}\nonumber\\
\leq&\bar{h}^k\big(\sum\nolimits_{i=1}^{N}\|z_i\|^2\big)^{k/2}\nonumber\\
=&\bar{h}^k\big(\|z\|\big)^{k}\leq N^{\frac{k}{2}-1}\bar{h}^k\sum\nolimits_{i=1}^{N}\|z_i\|^{k}.\nonumber
\end{align}
Thus
\begin{align*}
\sum_{i=1}^{N}\|p(v,\omega)-p(\hat{v}_i,\omega)\|^2\leq& \beta_{M}^2\sum_{i=1}^{N}\sum_{k=2}^{2m_0}N^{k-1}\bar{h}^{k+2}\|z_i\|^{k}.
\end{align*}
Then, by letting ${\gamma}_i(z_{i})=\sum\nolimits_{k=0}^{2m_0-2}\|z_i\|^{k}$, we have
\begin{align*}
\sum\nolimits_{i=1}^{N}\|p(v,\omega)-p(\hat{v}_i,\omega)\|^2
 \leq {\beta}_{M}^2\lambda_M\sum\nolimits_{i=1}^{N}{\gamma}_i(z_{i})\|z_{i}\|^2
\end{align*}
for some positive constant $\lambda_M=N^{k-1}\bar{h}^{k+2}$.
\end{proof}

\begin{thm}\label{bounded2} Consider systems \eqref{leader} and \eqref{compensator2}. There exist some sufficiently smooth positive functions $\rho_i(\cdot)\geq 1$ for all $i=1,2,\dots,N$, such that, for all $\mu>0$, $\hat{{v}}_i(0)\in \mathbb{V}_0$ and $\hat{\omega}_i(0)\in \mathbb{R}^{l}$, {\myr the signals} $\hat{v}_i(t)$, $\hat{\kappa}_i(t)$ and $\hat{\omega}_i(t) $ exist and are bounded for all $t\geq 0$ and satisfy
\EQ
&&\lim\limits_{t\rightarrow\infty} \tilde{{v}}_i (t) =0,  \label{eq101a}\\
&&\lim\limits_{t\rightarrow\infty}\dot{\tilde{\omega}}_i(t)=0,  \label{eq101b}\\
&&\lim\limits_{t\rightarrow\infty}\phi (\hat{v}_i(t)) \tilde{\omega}_i(t) =0.  \label{eq101c}
 \EN
Moreover, if $\phi^T(v)$ is {persistently exciting}, then
\EQ \label{tomega}
\lim\limits_{t\rightarrow\infty} \tilde{\omega}_i(t)=0.
\EN
\end{thm}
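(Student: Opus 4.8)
The plan is to run a Lyapunov analysis on the error system \eqref{compensator3}, \eqref{compensator3c}, using the bound of Lemma~\ref{lemmarho} to select $\rho_i$, and then to extract the asymptotic statements with Barbalat-type arguments and a persistency-of-excitation argument. I would take as Lyapunov candidate
$$V=\tfrac{1}{2}\tilde{v}^{T}(H\otimes I_m)\tilde{v}+\tfrac{1}{2\mu}\tilde{\omega}^{T}\tilde{\omega}+\tfrac{1}{2}\sum_{i=1}^{N}\tilde{\kappa}_i^{2},$$
which is positive definite because $H$ is symmetric positive definite. Since \eqref{eqev} gives $\tilde{v}^{T}(H\otimes I_m)=-z^{T}$, differentiating $V$ along \eqref{compensator3}, \eqref{compensator3c} makes the cross terms $z^{T}\phi_d(\hat v)\tilde{\omega}$ and $z^{T}(\tilde{\kappa}_d\rho_d\otimes I_m)z$ cancel against the contributions of $\dot{\tilde{\omega}}$ and of $\dot{\tilde{\kappa}}_i$, leaving
$$\dot V=-\sum_{i=1}^{N}\bar{\kappa}_i\rho_i(z_i)\|z_i\|^{2}-\sum_{i=1}^{N}z_i^{T}\big(p(\hat v_i,\omega)-p(v,\omega)\big).$$
By Young's inequality together with Lemma~\ref{lemmarho}, the second sum is at most $\tfrac{1}{2}\|z\|^{2}+\tfrac{1}{2}\beta_M^{2}\lambda_M\sum_i\gamma_i(z_i)\|z_i\|^{2}$. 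Choosing $\rho_i(z_i)=1+\gamma_i(z_i)\ge1$ (which in the polynomial case is exactly the $\rho_i$ proposed in the remark above) and using that $\bar{\kappa}_i$ is by definition an unknown constant that may be taken at least $\max\{1,\tfrac{1}{2}\beta_M^{2}\lambda_M\}$, both negative terms dominate and $\dot V\le-\tfrac{1}{2}\|z\|^{2}\le0$.

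From $\dot V\le0$ one reads off that $V$, hence $\tilde{v}$, $\tilde{\omega}$ and every $\tilde{\kappa}_i$, is bounded; with the assumed uniform bound on $v$ this makes $\hat v_i$, $\hat\omega_i$, $\hat\kappa_i$ and $z=-(H\otimes I_m)\tilde v$ bounded, so no solution escapes in finite time and all signals exist for $t\ge0$. Integrating $\dot V\le-\tfrac{1}{2}\|z\|^{2}$ shows $z\in L_2$, and since the right-hand side of \eqref{compensator3a} is a continuous function of bounded signals, $\dot{\tilde v}$ and therefore $\dot z=-(H\otimes I_m)\dot{\tilde v}$ are bounded; by Barbalat's lemma $z(t)\to0$, i.e. $\tilde{v}_i(t)\to0$, which is \eqref{eq101a}. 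Then \eqref{eq101b} is immediate from $\dot{\tilde\omega}_i=\mu\phi^{T}(\hat v_i)z_i$ and boundedness of $\phi(\hat v_i)$. For \eqref{eq101c} I would differentiate \eqref{reeq12a} once more to see that $\ddot{\tilde v}_i$ is a continuous function of bounded signals (in particular of $\dot z$ and $\dot{\hat v}_i$, already shown bounded); hence $\dot{\tilde v}_i$ is uniformly continuous, and, $\tilde v_i$ having a limit, $\dot{\tilde v}_i(t)\to0$. Solving \eqref{reeq12a} for $\phi(\hat v_i)\tilde\omega_i$ and using $z_i\to0$, boundedness of $\hat\kappa_i\rho_i(z_i)$, and $\phi(\hat v_i)-\phi(v)\to0$ (uniform continuity of $\phi$ on the relevant compact set, and $\tilde v_i\to0$) yields $\phi(\hat v_i(t))\tilde\omega_i(t)\to0$.

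For the last claim, assume $\phi^{T}(v)$ is persistently exciting. From \eqref{eq101c}, $\tilde v_i\to0$ and boundedness of $\tilde\omega_i$ one gets $\phi(v(t))\tilde\omega_i(t)\to0$. Given $\delta>0$, choose $t_\delta\ge t_0$ with $\|\phi(v(t))\tilde\omega_i(t)\|\le\delta$ and $\|\dot{\tilde\omega}_i(t)\|\le\delta$ for all $t\ge t_\delta$, which \eqref{eq101b} permits. For $t\ge t_\delta$ and $s\in[t,t+T_0]$ one has $\|\tilde\omega_i(s)-\tilde\omega_i(t)\|\le T_0\delta$, so $\|\phi(v(s))\tilde\omega_i(t)\|\le(1+\bar\phi T_0)\delta$ with $\bar\phi:=\sup_{t\ge0}\|\phi(v(t))\|$; integrating the square over $[t,t+T_0]$ and using $\tfrac{1}{T_0}\int_t^{t+T_0}\phi^{T}(v(s))\phi(v(s))\,ds\ge\epsilon I_l$ gives $\epsilon\|\tilde\omega_i(t)\|^{2}\le(1+\bar\phi T_0)^{2}\delta^{2}$. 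Since $\delta$ is arbitrary, $\tilde\omega_i(t)\to0$, which is \eqref{tomega}.

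The step I expect to be hardest is the selection of $\rho_i$. The mismatch $z_i^{T}(p(\hat v_i,\omega)-p(v,\omega))$ cannot be dominated by a fixed linear gain because the Jacobian of $p$ need not be bounded, yet $\rho_i$ has to be a designable function of $z_i$ only, while the quantity it must dominate depends on the unknown $\omega$, on the (unknown) uniform bound of $v$, and on $H^{-1}$. Lemma~\ref{lemmarho} is precisely what turns this mismatch into a $\gamma_i(z_i)\|z_i\|^{2}$ term, and the adaptive gain $\hat\kappa_i$ of \eqref{compensator2d} is what absorbs the unknown constant multiplier $\bar\kappa_i$; getting these two devices to fit together is the core of the proof.
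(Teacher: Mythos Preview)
Your proof is correct and follows essentially the same route as the paper: the same Lyapunov candidate, the same cancellation of the $\tilde\omega$- and $\tilde\kappa$-terms, Lemma~\ref{lemmarho} to dominate the nonlinear mismatch via $\rho_i$, and Barbalat-type arguments for \eqref{eq101a}--\eqref{eq101c}. The only cosmetic differences are that you use Young's inequality where the paper invokes the $\mathcal{S}$-procedure, you reach $z\to0$ via $z\in L_2$ with $\dot z$ bounded rather than by bounding $\ddot V$, and you give a direct PE argument in place of the cited lemma---all of which are equivalent and arguably cleaner.
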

\begin{proof}
Consider the Lyapunov function candidate
\begin{eqnarray}\label{reeq3b}
V  =\frac{1}{2}\big[{\tilde{{v}}}^T\left(H\otimes I_m\right){\tilde{{v}}}+\mu^{-1}{\tilde{\omega}}^T{\tilde{\omega}}+\sum\nolimits_{i=1}^{N}{\tilde{\kappa}}_i^T{\tilde{\kappa}}_i\big],
\end{eqnarray}
where $H$ is defined in Section \ref{sectionA} and is symmetric positive definite.
Differentiating (\ref{reeq3b}) along the trajectory of \eqref{reeq12} gives
\begin{align}\label{reeq4b}
\dot{V}  =&\tilde{{v}}^T\left(H\otimes I_m\right)\dot{\tilde{{v}}}+\mu^{-1}{\tilde{\omega}}^T\dot{{\tilde{\omega}}}+\sum\nolimits_{i=1}^{N}{\tilde{\kappa}}_i^T\dot{\tilde{\kappa}}_i\nonumber\\
=&-z^T(\bar{\kappa}_d\rho_d\otimes I_m)z -z^T(\tilde{\kappa}_d\rho_d\otimes I_m)z\nonumber\\
&-z^T\left[\phi_d (\hat{v})-I_N\otimes\phi\left(v\right)\right](\mathds{1}_N\otimes \omega)\nonumber\\
&-z^T\phi_d (\hat{v}) \tilde{\omega}
+\mu^{-1}{\tilde{\omega}}^T\dot{{\tilde{\omega}}}+\sum\nolimits_{i=1}^{N}{\tilde{\kappa}}_i^T\dot{\tilde{\kappa}}_i\nonumber\\
= &-\sum\nolimits_{i=1}^N\Big[\bar{\kappa}_i\rho_i(z_i)\|z_{i}\|^2-z_{i}^T\big[\phi (\hat{v}_i)-\phi (v)\big]\omega\Big]\nonumber\\
&-\sum\nolimits_{i=1}^{N}\big[\tilde{\kappa}_i\rho_i(z_i) z^T_i z_i-{\tilde{\kappa}}_i^T\dot{\tilde{\kappa}}_i\big]\nonumber\\
=&-\sum\limits_{i=1}^N\Big[\bar{\kappa}_i\rho_i(z_i)\|z_{i}\|^2-z_{i}^T\big[p(\hat{v}_i,\omega)-p(v,\omega)\big]\Big].
\end{align}
By using the $\mathcal{S}$ procedure and the inequality \eqref{sprocedure_1} in Lemma~\ref{lemmarho}, the equation in \eqref{reeq4b} implies $\dot{V}\leq 0$ if and only if there exits a $\lambda\geq 0$ such that%
\begin{align*}
    \sum\nolimits_{i=1}^{N}
    \begin{bmatrix}
     -\bar{\kappa}_i \rho_i(z_i) &1/2 \\
     1/2 & 0
    \end{bmatrix} \leq \lambda \lambda_M\beta_M^2 \sum\nolimits_{i=1}^{N} \begin{bmatrix}
     -\gamma_i(z_i) &0 \\
     0 & 1
    \end{bmatrix} .
\end{align*}
A sufficient condition to make the above inequality hold is%
\begin{align}
    \begin{bmatrix}
     -\bar{\kappa}_i \rho_i(z_i) + \lambda \lambda_M\beta_M^2 \gamma_i(z_i) &1/2 \\
     1/2 & -\lambda{\myr \lambda_M}\beta_M^2
    \end{bmatrix} <  0.\label{sprocedure2}
\end{align}
Therefore, {for any sufficiently smooth positive function} $\rho_i(\cdot)$ with the form%
\begin{align}\label{rhoz}
 \rho_i(z_i) = a_i \gamma_i(z_i) +b_i\geq \lambda \lambda_M\beta_M^2\gamma_i(z_i)/\bar{\kappa}_i
\end{align}
where $a_i$ and $b_i$ {are any positive numbers}, there exist $\bar{\kappa}_i$ and $\lambda$ to make \eqref{sprocedure2} hold and%
\begin{align}\label{VodtZ}
\dot{V} \leq - \varepsilon \sum\nolimits_{i=1}^{N} z^T_i z_i.
\end{align}%
for some $\varepsilon>0$.
Then, $V(t)$ is bounded, which means ${\tilde{{v}}}(t)$, ${\tilde{\omega}}(t)$ and ${\tilde{\kappa}}(t)$ are bounded, for all $t\geq 0$, and $\lim\limits_{t\rightarrow\infty}V(t)$ exits and is finite. Since $\tilde{v}(t)$ is bounded, $z(t)$ is bounded from \eqref{eqev}, and $\hat{v}(t)$ is bounded, for all $t\geq 0$. From \eqref{sprocedure_1} and the smoothness of $\gamma_i$, $i=1,\dots,N$, we know that $\tilde{\phi}_d(t)$ is bounded, for all $t\geq 0$. Again, using the smoothness of $\gamma_i(z_i)$, $i=1,\dots,N$,  $\rho_d(z(t))$ and $\dot{\rho}_d(z(t))$ are all bounded from \eqref{rhoz} and the fact that $z(t)$ is bounded, for all $t\geq 0$. From  (\ref{compensator3}), $\dot{{\tilde{{v}}}}(t)$ is bounded because $\rho_d(t)$, $z(t)$, {\myr $\phi_d(\hat{v}(t))$}, ${\tilde{\omega}}(t)$, {\myr $\phi(v(t))$}, and ${\tilde{\kappa}_d}(t)$ are all bounded for all $t\geq 0$.
{\myr The time derivative of $\dot{V}$ is
\begin{align}\label{ddvt}
\ddot{V}=&-\sum\limits_{i=1}^N\Big[\bar{\kappa}_i\dot{\rho}_i(z_i)\|z_{i}\|^2-\dot{z}_{i}^T\big[p(\hat{v}_i,\omega)-p(v,\omega)\big]\nonumber\\
&+2\bar{\kappa}_i\rho_i(z_i)z_{i}^T\dot{z}_{i}-z_{i}^T\big[\dot{p}(\hat{v}_i,\omega)-\dot{p}(v,\omega)\big]\Big].
\end{align}
From  Lemma \ref{lemmarho} and the fact that $z$ is bounded for all $t\geq0$, we have that $\|p(\hat{v}_i,\omega)-p(v,\omega)\|$ is bounded for all $t\geq0$.  Furthermore, by part \textbf{(i)} of Lemma \ref{chen2015stabilization1},  we can easily conclude that $\|\dot{p}(\hat{v}_i,\omega)\|$ and $\|\dot{p}(v,\omega)\|$ are bounded for all $t\geq0$ by sufficiently smooth positive functions of $\hat{v}_i$ and $v$, respectively.
In addition, we have proven that all the items $z_i$, $\dot{z}_{i}$, $\dot{\rho}_i(z_i)$, $\hat{v}_i$ and $v$ in \eqref{ddvt} are bounded for all $t\geq0$. which further implies $\ddot{V}(t)$ is bounded from \eqref{ddvt}, for all $t\geq 0$.}

By the Lyapunov-Like Lemma in \cite{r18},
we have
 $\lim\limits_{t\rightarrow\infty}\dot{V}(t)=0$.
From \eqref{VodtZ}, we have $$-\dot{V} \geq \varepsilon z^T z\geq 0$$
which implies that $ \lim\limits_{t\rightarrow\infty}z(t)=0$. From \eqref{eqev} and \eqref{compensator3b}, we can further have \eqref{eq101a} and (\ref{eq101b}), respectively.

To show (\ref{eq101c}), differentiating  $\dot{\tilde{{v}}}$ gives,
\begin{align}\label{reeq6b}
 \ddot{\tilde{{v}}}= &  \left(\bar{\kappa}_d\rho_d\otimes I_m\right)\dot{z}+ \big(\bar{\kappa}_d\dot{\rho}_d\otimes I_m\big)z\nonumber\\
 &+{\myr \big[\dot{\phi}_d (\hat{v})-I_N\otimes\dot{\phi}\left(v\right)\big]}(\mathds{1}_N\otimes \omega)\nonumber\\
 &+\dot{\phi}_d (\hat{v}) \tilde{\omega}+\phi_d (\hat{v})\dot{ \tilde{\omega}}+(\dot{\tilde{\kappa}}_d\rho_d\otimes I_m) z\nonumber\\
 & +(\tilde{\kappa}_d\dot{\rho}_d\otimes I_m) z+(\tilde{\kappa}_d\rho_d\otimes I_m) \dot{z}.
 \end{align}
We have shown that $\rho(t)$ and $\phi(t)$ are smooth, and $z(t)$, $\dot{z}(t)$, $\tilde{\omega}(t)$, $\dot{\tilde{\omega}}(t)$, and ${\tilde{\kappa}}(t)$ are all bounded, for all $t \geq 0$. We can also show that $\dot{\tilde{\kappa}}(t)$ is bounded from \eqref{compensator3c}, for all $t \geq 0$. Thus, $\ddot{\tilde{{v}}}(t)$ is bounded from \eqref{reeq6b}, for all $t \geq 0$.
By the Barbalat's lemma, we have $\lim\limits_{t\rightarrow\infty}\dot{\tilde{{v}}}(t)=0$, which further implies $$\lim\limits_{t\rightarrow\infty}\big[\phi\left(\hat{v}_i(t)\right)\tilde{\omega}_i(t)+\left(\phi\left(\hat{v}_i(t)\right)-\phi\left(v(t)\right)\right)\omega\big]=0.$$ Besides,
  for $i=1,\dots,N$,
\begin{align*}
\lim\limits_{t\rightarrow\infty}&\left[\phi\left(\hat{v}_i(t)\right)-\phi\left(v(t)\right)\right]\omega =0,
\end{align*}
due to \eqref{eq101a}.
Then, $$\lim\limits_{t\rightarrow\infty}\phi\left(\hat{v}_i(t)\right)\tilde{\omega}_i(t)=0.$$  Thus,
(\ref{eq101c}) holds.
Eqs.~(\ref{eq101a}) and \eqref{eq101c} imply
\begin{align*}
 \lim\limits_{t\rightarrow\infty} \phi({\hat{v}}_i(t) - \tilde{{v}}_i(t))\tilde{\omega}_i(t)= \lim\limits_{t\rightarrow\infty} \phi(v(t))\tilde{\omega}_i(t)
 = 0.
\end{align*}
If $\phi^T(v)$ is persistently exciting, by Lemma 2.4 of \cite{chen2015stabilization},
we have \eqref{tomega} from \eqref{eq101b}.
 \end{proof}

 \begin{rem}\label{vsxi}
 As a result of {Theorem} \ref{bounded2}, it is trivial to show that
\begin{subequations}\begin{align}
   \lim\limits_{t\rightarrow\infty} \big(E\hat{v}_i(t)-q_{N+1}(t)\big)=& 0,  \label{conzero}\\
     \lim\limits_{t\rightarrow\infty}  \big(E\dot{\hat{v}}_i(t)-\dot{q}_{N+1}(t)\big)=&0,~~i=1,\dots,N.\label{dotconzero}
 \end{align}
\end{subequations}
\end{rem}

\begin{rem}
The learning-based distributed observer \eqref{compensator2} features itself in two aspects.
 Unlike the nonlinear distributed observers proposed in many literatures, such as \cite{LJHuang2018auto} and \cite{DongChen202021}, the observer \eqref{compensator2} is independent of the leader's dynamic knowledge, i.e., parameters $\omega$. Moreover, the observer \eqref{compensator2} does not depend on the eigenvalues of either the Laplacian matrix or the adjacency matrix of the communication graph, which is normally required in the literature of distributed observer design. Therefore, our proposed observer is fully distributed. In particular, the observer \eqref{compensator2} can adaptively learn the true values of the leader's parameters.
\end{rem}



  \begin{exam}\label{remwu}
This example shows that the well-known Van der Pol system \citep*{r18} satisfies the requirements of the leader node.
Suppose the leader's states are generated by a Van der Pol oscillator of the following form
\begin{equation}\label{Van0}
 \dot{v}=p(v,\omega)= \left[
             \begin{array}{c}
               av_2 \\
               -bv_1 + c\left(1-v_{1}^2\right)v_2\\
             \end{array}
           \right],
\end{equation}
where $v=\textnormal{\col}(v_1, v_2)$ and $\omega=\textnormal{\col}(a,b,c)$.
When $a$, $b$, and $c$ are positive constants, the Van der Pol system will have a stable limit cycle with periodic $T$ for any {\myr nonzero} initial condition, which implies there exists $t_0$ such that $v_{l}(t)=v_{l}(t+T), \forall t\geq t_0, \forall l=1,2$ \citep*{buchli2006engineering}.
Then, we have
\begin{subequations}
\begin{align}
    \int_{t}^{t+T}&v_{2}(s)ds=\int_{t}^{t+T}\frac{1}{a}\dot{v}_{1}(s)ds =0  ,\nonumber\\
     \int_{t}^{t+T}&v_1(s)v_2(s)\left(1-v_{1}^2(s)\right)ds=0.\nonumber
 \end{align}
\end{subequations}
For any $t\geq t_0$, both $v_{1}(t)$ and $v_{2}(t)$ are periodic functions with periodic $T$,
\begin{align*}
\int_{t}^{t+T}v_{1}^2(s) ds >& 0,\\
\int_{t}^{t+T}v_{2}^2(s) ds >& 0,\\
 \int_{t}^{t+T}v_1^2(s)v_2^2(s)\left(1-v_{1}^2(s)\right)^2ds  >& 0.
 \end{align*}
From \eqref{Van0}
we have
\begin{align*}
   \frac{\partial p(v, \omega)}{ \partial \omega} =& \left[
                     \begin{array}{ccc}
                        v_2 &  0 &  0 \\
                        0 &  -v_1 & \left(1-v_{1}^2\right)v_2  \\
                     \end{array}
                   \right]=\phi\left(v\right).
\end{align*}
Then, It can be verified that
\begin{align*}
  \phi^T\left(v\right) \phi\left(v\right)=\left[
                                            \begin{array}{ccc}
                                              v_2^2   &  0 &  0 \\
                                               0 &  v_1^2 & -v_1v_2\left(1-v_{1}^2\right)  \\
                                               0 &  -v_1v_2\left(1-v_{1}^2\right) &  v_2^2\left(1-v_{1}^2\right)^2 \\
                                            \end{array}
                                          \right].
\end{align*}
Hence, for any $t\geq t_0$, we will have
\begin{subequations}
\begin{align}
    &   \int_{t}^{t+T} \phi^T\left(v(s)\right) \phi\left(v(s)\right)ds \nonumber\\
    =  &  \left[
                                            \begin{array}{ccc}
                                              \int_{t}^{t+T}v_2^2ds   &  0 &  0 \\
                                               0 &   \int_{t}^{t+T}v_1^2ds & 0  \\
                                               0 &  0 &   \int_{t}^{t+T}v_2^2\left(1-v_{1}^2\right)^2ds \\
                                            \end{array}
                                          \right]\nonumber\\
     >& 0, \nonumber
\end{align}
\end{subequations}
which implies that $\phi^T\left(v(t)\right)$ is {persistently exciting}.
 \end{exam}

\section{An application to leader-following synchronization of multiple Euler-Lagrange  systems}\label{section4}
{\meng The Euler-Lagrange equations are often used to describe
the evolution of a mechanical system  subject to
holonomic constraints, such as the dynamics of robot manipulators} \citep*{lewis2003control,roy2021adaptive}.
In this section, we apply the proposed learning-based distributed observer to solve the leader-following synchronization problem of multiple Euler-Lagrange systems subject to an uncertain nonlinear leader system.

\subsection{Problem formulation}
Consider a group of $N$ follower systems described by the following Euler-Lagrange system
\begin{equation}\label{MARINEVESSEL1}
  M_i\left(q_i\right)\ddot{q}_i+C_i\left(q_i,\dot{q}_i\right)\dot{q}_i+G_i\left(q_i\right)=\tau_i,
\end{equation}
where, for $i=1,\dots,N$, $q_i\in \mathds{R}^n$ is the vector of generalized position vector, $M_i\left(q_i\right)\in \mathds{R}^{n\times n}$ is the symmetric positive definite inertia matrix, $C_i\left(q_i,\dot{q}_i\right)\in \mathds{R}^{n\times n}$ is the coriolis and centripetal forces, and $\tau_i\in \mathds{R}^{n}$ is the control torque.
For $i=1,\dots,N$,  system \eqref{MARINEVESSEL1} has the following three properties \citep*{lewis2003control}:
\begin{enumerate}
\item The inertia matrix $M_i\left(q_i\right)$ is symmetric and uniformly positive definite, i.e. that there exists positive constants $\alpha$ and $\beta$ such that, for all $q_i$%
\begin{align*}
\beta I\geq  M_i\left(q_i\right) \geq \alpha I,
\end{align*}%
where $I$ is the identity matrix with appropriate dimensions.
\item For all $a, \dot{a}\in \mathds{R}^n$,
\begin{align}\label{YMQ}M_i\left(q_i\right)a+C_i\left(q_i,\dot{q}_i\right)\dot{a}+& G_i\left(q_i\right)\nonumber\\
=&Y_i\left(q_i,\dot{q}_i,a,\dot{a}\right)\theta_i,\end{align}
  where $Y_i\left(q_i,\dot{q}_i,a,\dot{a}\right)\in \mathds{R}^{n\times p}$ is a known regressor matrix and $\theta_i\in \mathds{R}^{p}$ is a constant vector consisting of the uncertain parameters of (\ref{MARINEVESSEL1}).
\item \label{prop1}$\big(\dot{M}_i\left(q_i\right)-2C_i\left(q_i,\dot{q}_i\right)\big)$ is skew symmetric, $\forall q_i, \dot{q}_i$.
 \end{enumerate}
%
%

\begin{prob}\label{ldlesp}
Consider the multi-agent system  consisting of (\ref{leader}) and (\ref{MARINEVESSEL1}). Let $\mathds{V}_0\subset \mathds{R}^m$ be a compact set containing the origin. Design a distributed control law $\tau_i$,
such that for any initial condition $q_i(0)\in\mathds{R}^{n}$, $\dot{q}_i (0)\in\mathds{R}^{n}$, and $v(0)\in \mathds{V}_0$,
$q_i(t)$, and $\dot{q}_i(t)$ are bounded for all $t\geq0$, and
$$\lim\limits_{t\rightarrow\infty}\left(q_i\left(t\right)-q_{N+1}\left(t\right)\right)=0.$$
\end{prob}

\subsection{Distributed observer based control law}

Define $\tilde{q}_i= q_i-E\hat{v}_i$ as the ``leader output tracking error'',  where $E\hat{v}_i$ can be regarded as an estimate of leader's output $q_{N+1}$ in (\ref{leader}).The estimated ``leader output derivative''
\begin{align}
    \dot{\hat{q}}_i = E \phi (\hat{v}_i) \hat{\omega}_i-\alpha\tilde{q}_i \label{claw1}
\end{align}
is formed by shifting the estimated derivative of the output $E\phi (\hat{v}_i) \hat{\omega}_i$ according to the tracking error $\tilde{q}_i$, where $\hat{v}_i$ and $\hat{\omega}_i$ are defined in \eqref{compensator2}, and $\alpha>0$ is a positive scalar to be designed.
The vector $s_i$ defined as%
\begin{align}
    s_i = \dot{q}_i - \dot{\hat{q}}_i\label{claw2}
\end{align}
for $i=1,\dots,N$, can be interpreted as a ``leader output derivative error''.
Take the adaptive control law to be
\begin{subequations}\label{adapcontrolaw}
\begin{align}
  \tau_i=&-K_{i}s_i+Y_i\hat{\theta}_i,\label{dstureq11i}\\
   \dot{\hat{\theta}}_i =& -\Gamma_i Y_i^T s_i,\label{thetagamma}
\end{align}
\end{subequations}
where $\Gamma_i$ is a diagonal matrix with positive diagonal entries and $K_i$ is a positive definite matrix, $\hat{\theta}_i$ is the parameter estimates, for $i=1,\dots,N$.
The control law \eqref{adapcontrolaw} includes a ``feedforward'' term $Y_i\hat{\theta}_i$ .
\subsection{Stability analysis}
\begin{thm} Consider systems \eqref{leader}, \eqref{MARINEVESSEL1}, and a digraph $\bar{\mathcal G}$. For any initial condition $q_i(0)\in\mathds{R}^{n}$, $\dot{q}_i (0)\in\mathds{R}^{n}$, and $v(0)\in \mathds{V}_0$, Problem \ref{ldlesp} is solvable by the control law {\eqref{adapcontrolaw}}.
\end{thm}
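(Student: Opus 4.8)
The plan is to follow the Slotine--Li adaptive control paradigm, treating the estimated ``leader output derivative'' $\dot{\hat q}_i$ in \eqref{claw1} as a distributed surrogate for the unavailable reference velocity. First I would substitute $\dot q_i = s_i + \dot{\hat q}_i$ and $\ddot q_i = \dot s_i + \ddot{\hat q}_i$ into \eqref{MARINEVESSEL1}, use the linear parametrization \eqref{YMQ} evaluated at $a=\dot{\hat q}_i$, $\dot a = \ddot{\hat q}_i$ so that $M_i\ddot{\hat q}_i + C_i\dot{\hat q}_i + G_i = Y_i\theta_i$, and insert \eqref{adapcontrolaw} to obtain the closed-loop error dynamics $M_i(q_i)\dot s_i = -\bigl(C_i(q_i,\dot q_i)+K_i\bigr)s_i + Y_i\tilde\theta_i$ with $\tilde\theta_i=\hat\theta_i-\theta_i$ and $Y_i=Y_i(q_i,\dot q_i,\dot{\hat q}_i,\ddot{\hat q}_i)$. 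Along these dynamics the candidate $V_i=\tfrac12 s_i^T M_i(q_i)s_i+\tfrac12\tilde\theta_i^T\Gamma_i^{-1}\tilde\theta_i$ obeys $\dot V_i = -s_i^T K_i s_i\le 0$: the term $\tfrac12 s_i^T(\dot M_i-2C_i)s_i$ vanishes by Property~\ref{prop1} and the cross term $s_i^TY_i\tilde\theta_i$ cancels against $\tilde\theta_i^T\Gamma_i^{-1}\dot{\tilde\theta}_i$ through the adaptation rule \eqref{thetagamma}. Hence $V_i$ is nonincreasing, so $s_i,\tilde\theta_i\in L_\infty$ and $s_i\in L_2$ on the maximal interval of existence.

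Next I would close the boundedness argument using the cascade from the learning-based observer to the tracking error. Combining $\tilde q_i=q_i-E\hat v_i$, the definition \eqref{claw2} of $s_i$, and the observer equation \eqref{compensator2b}, one derives $\dot{\tilde q}_i=-\alpha\tilde q_i+s_i-\hat\kappa_i\rho_i(z_i)Ez_i$. Theorem~\ref{bounded2} guarantees $\hat v_i,\dot{\hat v}_i,\hat\omega_i,\dot{\hat\omega}_i,\hat\kappa_i,z_i$ are bounded with $z_i\to 0$, and $s_i$ is bounded from the previous step, so the Hurwitz first-order system in $\tilde q_i$ is driven by a bounded input, giving $\tilde q_i\in L_\infty$; consequently $q_i=\tilde q_i+E\hat v_i$, then $\dot{\hat q}_i=E\phi(\hat v_i)\hat\omega_i-\alpha\tilde q_i$, then $\dot q_i=s_i+\dot{\hat q}_i$, and finally $\ddot{\hat q}_i=E\dot\phi(\hat v_i)\hat\omega_i+E\phi(\hat v_i)\dot{\hat\omega}_i-\alpha\dot q_i+\alpha E\dot{\hat v}_i$ are all bounded. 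Since the full closed-loop state stays in a compact set on the maximal interval, finite escape time is excluded and all signals exist on $[0,\infty)$.

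Then I would upgrade $s_i\in L_2\cap L_\infty$ to $\lim_{t\to\infty}s_i(t)=0$. With $q_i,\dot q_i,\dot{\hat q}_i,\ddot{\hat q}_i$ bounded, continuity makes $C_i(q_i,\dot q_i)$ and $Y_i(q_i,\dot q_i,\dot{\hat q}_i,\ddot{\hat q}_i)$ bounded, hence $\dot s_i=M_i^{-1}\bigl(-(C_i+K_i)s_i+Y_i\tilde\theta_i\bigr)$ is bounded; Barbalat's lemma (applied either to $s_i\in L_2$ with bounded derivative, or to the uniformly continuous $\dot V_i$ with $V_i$ bounded below) yields $s_i\to 0$. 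Substituting $s_i\to 0$ and $z_i\to 0$ into $\dot{\tilde q}_i=-\alpha\tilde q_i+\bigl(s_i-\hat\kappa_i\rho_i(z_i)Ez_i\bigr)$, a stable first-order filter driven by a vanishing input gives $\tilde q_i\to 0$. Finally $q_i-q_{N+1}=\tilde q_i+\bigl(E\hat v_i-q_{N+1}\bigr)\to 0$ by Remark~\ref{vsxi}, Eq.~\eqref{conzero}, while $q_i$ and $\dot q_i$ have been shown bounded, which is exactly what Problem~\ref{ldlesp} requires.

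I expect the main obstacle to be the boundedness bookkeeping in the second step rather than the Lyapunov computation: because $\dot{\hat q}_i$ and $\ddot{\hat q}_i$ feed the regressor $Y_i$ appearing in $\tau_i$ yet themselves depend on $q_i$, $\dot q_i$ and the observer transients, boundedness must be established in the correct order and verified to hold on the maximal interval before Barbalat's lemma can be invoked. A secondary point needing care is confirming that the regressor used in both $\tau_i$ and the adaptation law \eqref{thetagamma} is precisely the one realizing the parametrization \eqref{YMQ} at $(a,\dot a)=(\dot{\hat q}_i,\ddot{\hat q}_i)$, since that identification is what produces the cancellation in $\dot V_i$.
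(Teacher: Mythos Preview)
Your proposal is correct and follows essentially the same Slotine--Li route as the paper: the same closed-loop error dynamics $M_i\dot s_i=-C_is_i-K_is_i+Y_i\tilde\theta_i$, the same Lyapunov function $V_i=\tfrac12 s_i^TM_is_i+\tfrac12\tilde\theta_i^T\Gamma_i^{-1}\tilde\theta_i$ with $\dot V_i=-s_i^TK_is_i$, Barbalat's lemma for $s_i\to0$, and the same cascade equation $\dot{\tilde q}_i+\alpha\tilde q_i=s_i-\hat\kappa_i\rho_i(z_i)Ez_i$ combined with Theorem~\ref{bounded2} and Remark~\ref{vsxi}. If anything, your ordering of the boundedness bookkeeping (establishing $\tilde q_i,q_i,\dot q_i,\dot{\hat q}_i,\ddot{\hat q}_i\in L_\infty$ \emph{before} invoking Barbalat, and explicitly ruling out finite escape time) is more careful than the paper's, which asserts ``this in turn implies that $q_i,\dot q_i$ are bounded'' without spelling out the intermediate $\tilde q_i$ step.
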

\begin{proof}
Differentiating (\ref{claw1}) and \eqref{claw2}, and considering $\tilde{q}_i= q_i-E\hat{v}_i$ give, 
\begin{subequations}\label{dclaw}
\begin{align}
\ddot{\hat{q}}_{i}&=E\phi (\hat{v}_i) \dot{\omega}_i+E\phi(\dot{\hat{v}}_i) \hat{\omega}_i-\alpha\big(\dot{q}_i-E\dot{\hat{v}}_i\big),\label{dclaw1}\\
 \dot{s}_i&=\ddot{q}_i-\ddot{\hat{q}}_{i}. \label{dclaw2}
 \end{align}
\end{subequations}
By equation \eqref{YMQ}, there exists a known matrix $$Y_i=Y_i\big(q_i,\dot{q}_i, \ddot{\hat q}_{i}, \dot{\hat q}_{i} \big)$$ and an unknown constant vector $\theta_i$ such that,
\begin{equation}\label{claw0}
 Y_i\theta_i=M_i\left(q_i\right)\ddot{\hat q}_{i}+C_i\left(q_i,\dot{q}_i\right)\dot{\hat q}_{i}+G_i\left(q_i\right).
\end{equation}
Substituting $Y_i\theta_i$ into (\ref{MARINEVESSEL1}) gives
\begin{align}\label{undistur2}
  M_i\left(q_i\right)(\ddot{q}_i-\ddot{\hat{q}}_{i})+C_i\left(q_i,\dot{q}_i\right)(\dot{q}_i-\dot{ \hat q}_{i}) &\nonumber\\
+G_i\left(q_i\right)-G_i\left(q_i\right)+Y_i\theta_i&=\tau_i.
\end{align}
Considering (\ref{claw2}), \eqref{undistur2} can be written as
\begin{eqnarray}\label{undistur3}
M_i\left(q_i\right)\dot{s}_i=-C_i\left(q_i,\dot{q}_i\right)s_i+\tau_i-Y_i\theta_i.\label{undistur1i}
\end{eqnarray}
Substituting \eqref{adapcontrolaw} into (\ref{undistur3}) gives, for $i=1,\dots,N$,
 \begin{equation}\label{MARINEVESSEL14}
 M_i\left(q_i\right)\dot{s}_i=-C_i\left(q_i,\dot{q}_i\right)s_i-K_is_i+Y_i\tilde{\theta}_i
\end{equation}
where $\tilde{\theta}_i=\hat{\theta}_i-\theta_i$.
For $i=1,\dots,N$,  consider the following Lyapunov function candidate
\begin{equation}\label{MARINEVESSEL15}
V_i=\frac{1}{2}\big[s_{i}^TM_i\left(q_i\right)s_i+\tilde{\theta}_{i}^T\Gamma_i^{-1}\tilde{\theta}_{i}\big],
\end{equation}
The time derivative of \eqref{MARINEVESSEL15} along the trajectory (\ref{MARINEVESSEL14}) yields
\begin{align}\label{MARINEVESSEL16}
  \dot{V}_i&=s_{i}^TM_i\left(q_i\right)\dot{s}_i+\frac{1}{2}s_{i}^T\dot{M}_i\left(q_i\right)s_i+\tilde{\theta}_{i}^T\Gamma_i^{-1}\dot{\tilde{\theta}}_{i} \nonumber\\
           &=-s_{i}^TK_is_i \leq 0, ~i=1,\dots,N.
\end{align}%

Now since $V_i(t)\geq 0$ and $\dot{V}_i(t) \leq 0$, {\myr $V_i(t)$} remains bounded, for all $t\geq 0$. Equation \eqref{MARINEVESSEL15} further implies that  both $s_i(t)$ and $\tilde{\theta}_{i}(t)$ are bounded, for all $t\geq 0$. Furthermore, this in turn implies that $q_i(t)$, $\dot{q}_i(t)$, and $\hat{\theta}_i(t)$ are all
bounded, for all $t\geq 0$. Noticing the closed-loop dynamics in \eqref{MARINEVESSEL14}, this shows that $\dot{s}_i(t)$ is also bounded, for all $t\geq 0$. Since, given \eqref{MARINEVESSEL16}, one has $\ddot{V}_i = -2 s_i^T K_i \dot{s}_i$, which shows that $\ddot{V}_i(t)$ is bounded, for all $t\geq 0$. Furthermore, Barbalat's lemma indicates that $\dot{V}_i(t)$ tends to zero, which implies that $s_i(t)\rightarrow0$ as $t\rightarrow\infty$.

Using \eqref{claw1}, \eqref{claw2} and (\ref{compensator2b}) gives
   \begin{eqnarray}\label{MARINEVESSEL19}
   \dot{q}_{i}-E\dot{\hat{v}}_i+\alpha\left(q_{i}-E\hat{v}_i\right) =s_i-\hat{\kappa}_i\rho_i(z_i) E z_i.
  \end{eqnarray}
With $\tilde{q}_i=\left(q_i-E\hat{v}_i\right)$, equation (\ref{MARINEVESSEL19}) can be rewritten as
   \begin{eqnarray}\label{reMARINEVESSEL19}
   &&\dot{\tilde{q}}_{i}+\alpha \tilde{q}_i  = s_i-\hat{\kappa}_i\rho_i(z_i) E z_{i}.
  \end{eqnarray}
Since,  by Lemma \ref{bounded2}, $\lim_{t\rightarrow\infty}z_{i}(t)= 0$, for $i=1,\dots,N$, equation (\ref{reMARINEVESSEL19}) can be viewed as a stable differential equation in $\tilde{q}_i$ with
$\left(s_i(t)-\hat{\kappa}_i(t)\rho_i(z_i(t)) E z_{i}(t)\right)$ as the input, which are bounded over $t\geq 0$ and tend to zero as $t\rightarrow \infty$. Thus, we conclude that both $\tilde{q}_i(t)=\left(q_i(t)-E\hat{v}_i(t)\right)$ and $\dot{\tilde{q}}_i(t)=\big(\dot{q}_i(t)-E\dot{\hat{v}}_i(t)\big)$ are bounded over $t\geq 0$ and will tend to zero as $t\rightarrow \infty$. These facts together with (\ref{conzero}) and (\ref{dotconzero}) complete the proof.
\end{proof}

\section{A simulation example}\label{section5}

Consider a group of six Euler-Lagrange systems connected according to the communication topology shown in Fig.~\ref{fig1}.
\begin{figure}[htbp]
\centering
\includegraphics[width=0.40\textwidth,trim=200 502 207 128,clip]{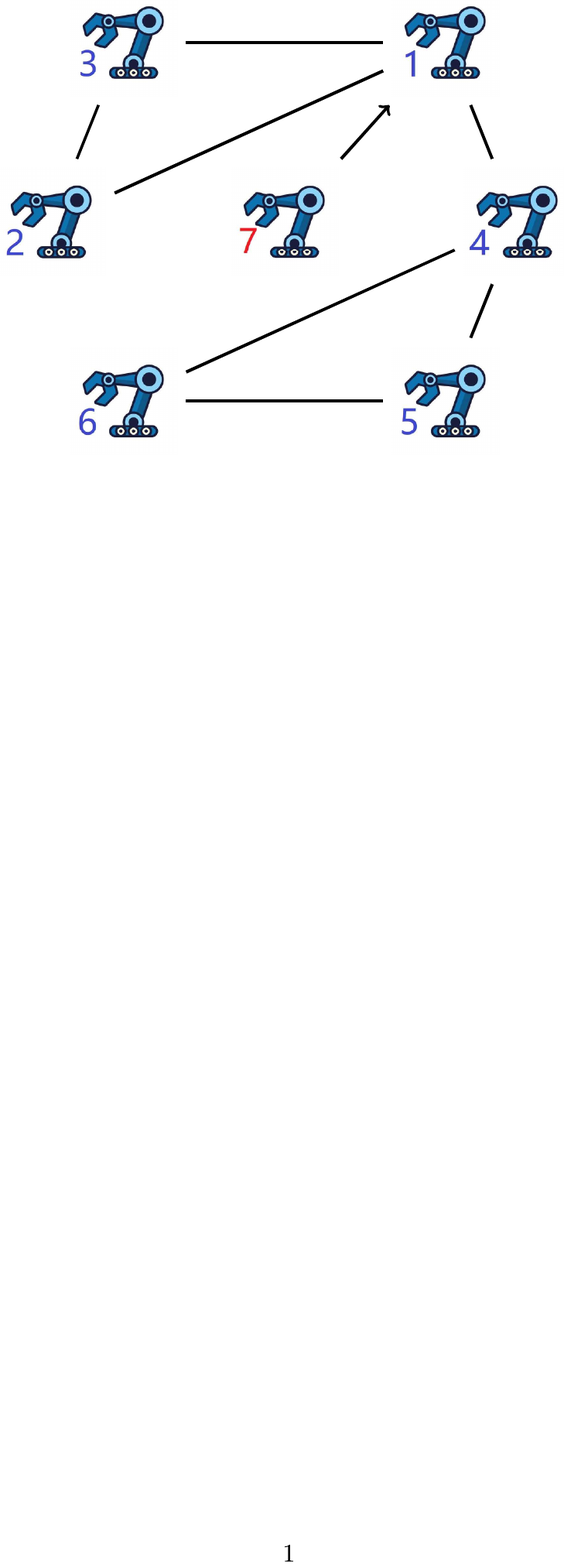}
\caption{ {\myr Communication topology $\bar{\mathcal{G}}$\protect\footnotemark[1]}}\label{fig1}
\end{figure}
\footnotetext[1]{The picture of robot arm is adopted from www.vector4free.com}
It is clear that the communication graph contains a spanning {tree} with the leader (i.e., node 7) being the root. Each follower (i.e., nodes 1-6) is a two-link planar elbow arm whose dynamics are described by \eqref{MARINEVESSEL1},
where $q_i=\col\left(q_{i1}, q_{i2}\right)$ is the joint variable representing the angles of both joints of the two-link robot arm,
\begin{align*}
  M_i\left(q_i\right) &= \left[
                            \begin{array}{cc}
                              \theta_{i1} +\theta_{i2}+2\theta_{i3}\cos q_{i2} &  \theta_{i2}+\theta_{i3}\cos q_{i2} \\
                               \theta_{i2}+ \theta_{i3}\cos q_{i2}&  \theta_{i2} \\
                            \end{array}
                          \right], \\
  C_i\left(q_i,\dot{q}_i\right) &= \left[
                                      \begin{array}{cc}
                                        -\theta_{i3}\dot{q}_{i2}\sin q_{i2}& -\theta_{i3}\left(\dot{q}_{i1}+\dot{q}_{i2}\right)\sin q_{i2} \\
                                         \theta_{i3}\dot{q}_{i1}\sin q_{i2} & 0  \\
                                      \end{array}
                                    \right],\\
 G_i\left(q_i\right) &=\left[
                          \begin{array}{c}
                            \theta_{i4}g\cos q_{i1}+\theta_{i5}g\cos\left( q_{i1}+ q_{i2}\right) \\
                            \theta_{i5}g\cos\left( q_{i1}+ q_{i2}\right) \\
                          \end{array}
                        \right],
\end{align*}
and the unknown vector $\theta_i=\col\left(\theta_{i1},\theta_{i2},\theta_{i3},\theta_{i4},\theta_{i5}\right)$ is induced by the unknown link masses and link offsets \citep*[p.~117]{lewis2003control}.

The leader's signal is generated by a Van der Pol system in the form \eqref{Van0} with%
\begin{align*}
  \frac{\partial p(v, \omega)}{ \partial \omega} &=\left[
                     \begin{array}{ccc}
                        v_2 &  0 &  0 \\
                        0 &  -v_1 & \left(1-v_{1}^2\right)v_2  \\
                     \end{array}
                   \right], \\
  \omega&= \col\left(a,b,c\right),
\end{align*}
where $a$, $b$ and $c$ are {\myr unknown} positive constant scalars. The {states of the} Van der Pol system will converge to a stable limit cycle \citep*{r18}. Therefore, $v(t)$ is bounded for all $t\geq 0$ and for any initial condition $v(0)$.

{\myr It is noticed that $p(x,\omega)$ is in the polynomial differential form of degree $3$. Then, for any $x\in \mathds{R}^{2}$, we have
\begin{align*}
\frac{\partial p(x,\omega)}{\partial x}
=\left[
             \begin{array}{c}
             \frac{\partial p_1(x,\omega)}{\partial x} \\
             \frac{\partial p_2(x,\omega)}{\partial x}
             \end{array}
           \right]
=\left[
             \begin{array}{c;{1pt/2pt} c}
               0 & a\\
               \hdashline[1pt/2pt]
               -b-2c x_{1}x_2&  c(1-x_1^2)\\
             \end{array}
           \right].
\end{align*}
Hence, for some unknown {\myr sufficiently large} positive constant $\bar{\omega}$, we have
\begin{align*}
\sum\nolimits_{l=1}^{2}\Big\|\frac{\partial p_l(x,\omega)}{\partial x}\Big\|\leq & {\myr a+\|b+2c x_{1}x_2\|+\|c(1-x_1^2)\|}\\
\leq & {\myr a+b +c+2c\|x_{1}\|\|x_2\|+c\|x_1\|^2}\\
\leq & 2\bar{\omega}+{2}\bar{\omega}\|x\|^2.
\end{align*}
Then, we have
\begin{align*}
\sum\nolimits_{l=1}^{2}\int_{0}^{1} \Big\|\frac{\partial p_l(x,\omega)}{\partial x}\Big\|_{x=v+\vartheta_l\tilde{v}_i}d\vartheta_l
\leq & {\myr 2\bar{\omega}[1+(\|v\|+\|\tilde{v}_i\|)^2]}\\
\leq & \sum\nolimits_{k=0}^{2}c_k(\omega,v)\|\tilde{v}_i\|^k.
\end{align*}
where $c_0(\omega,v)=2\bar{\omega}(1+\|v\|^2)$, $c_1(\omega,v)=4\bar{\omega}\|v\|$ and $c_2(\omega,v)=1$.
 Then, by Lemma \ref{lemmarho}, we can choose
\begin{align}
\rho_i(z_i)=&2+6(\|z_{i}\|+\|z_{i}\|^{2}+\|z_{i}\|^{3}+\|z_{i}\|^{4}).
\end{align}
Please be noted that $b_i=2$ and $a_i=6$ in $\rho_i(z_i)$ are randomly chosen in $\mathds{R}^{+}$.}
By Lemma \ref{lemmarho}, for $i=1,\dots,6$, we can design a learning-based distributed observer for (\ref{leader}) in the form \eqref{compensator2},
where $\mu=10$, and $\hat{\omega}_{i}=\col\big(\hat{a}_i,\hat{b}_i,\hat{c}_i\big) \in \mathds{R}^3$ is the estimation of $\omega$.

Based on this observer, we can further synthesize a control law of the form (\ref{dstureq11i}) with the following parameters: $K_i=20I_2$, $\alpha=2$, and $\Gamma_i=10$. For $i=1,\dots,N$, the actual values of $\theta_i$ are given as follows:
\begin{eqnarray*}
 && \theta_1 = \col(0.64,1.10, 0.08,0.64,0.32), \\
  && \theta_2 =\col(0.76,1.17,0.14,0.93,0.44),\\
  &&\theta_3 = \col(0.91,1.26,0.22, 1.27,0.58),\\
  &&\theta_4 = \col(1.10,1.36,0.32,1.67,0.73),\\
  &&\theta_5 = \col(1.21,1.16,0.12,1.45, 1.03),\\
   &&\theta_6 = \col(1.31,1.56,0.22,1.65,1.33).
\end{eqnarray*}

\begin{figure}[htbp]
  \centering
  \epsfig{figure=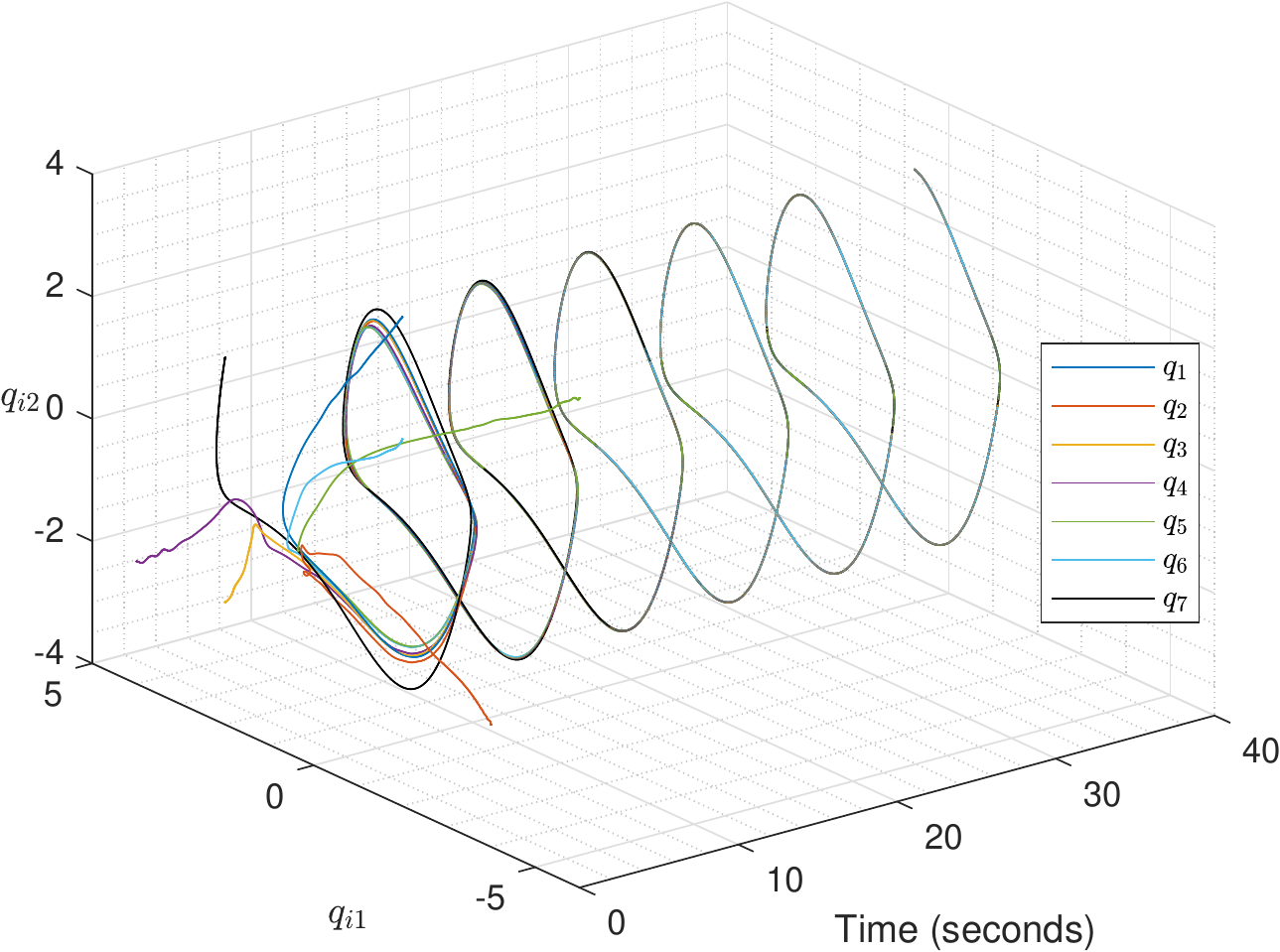,width=\linewidth}
  \caption{Trajectories of $q_i=\col(q_{i1}, q_{i2})$, $i=1,\dots,7.$ }\label{fig2qq}
  \end{figure}
\begin{figure}[htbp]

  \epsfig{figure=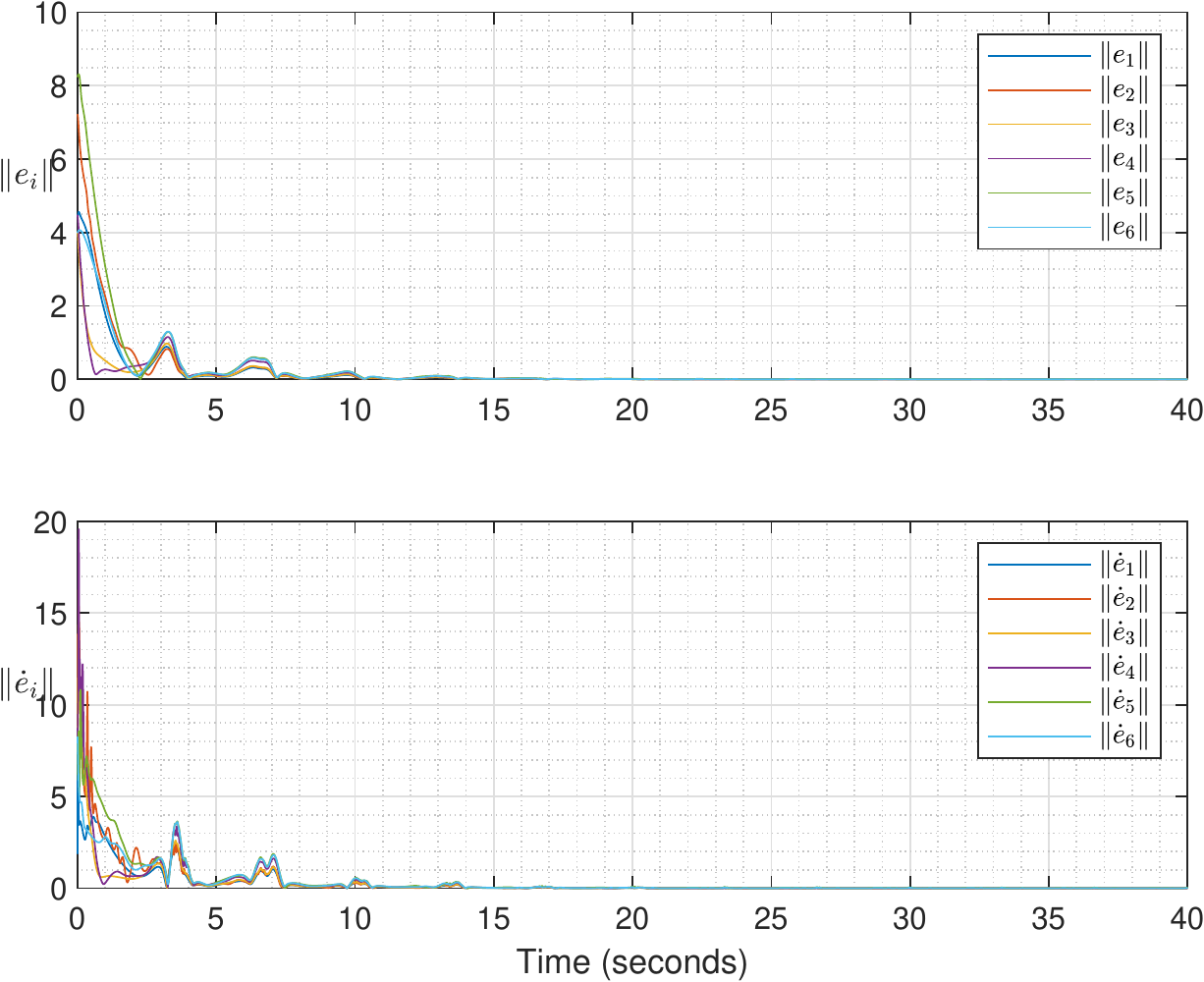,width=\linewidth}
  \caption{Trajectories of $e_i$ and $\dot{e}_i$, $i=1,\dots,6.$}\label{fig2}

 \bigskip
  \centering\setlength{\unitlength}{0.75mm}
  \epsfig{figure=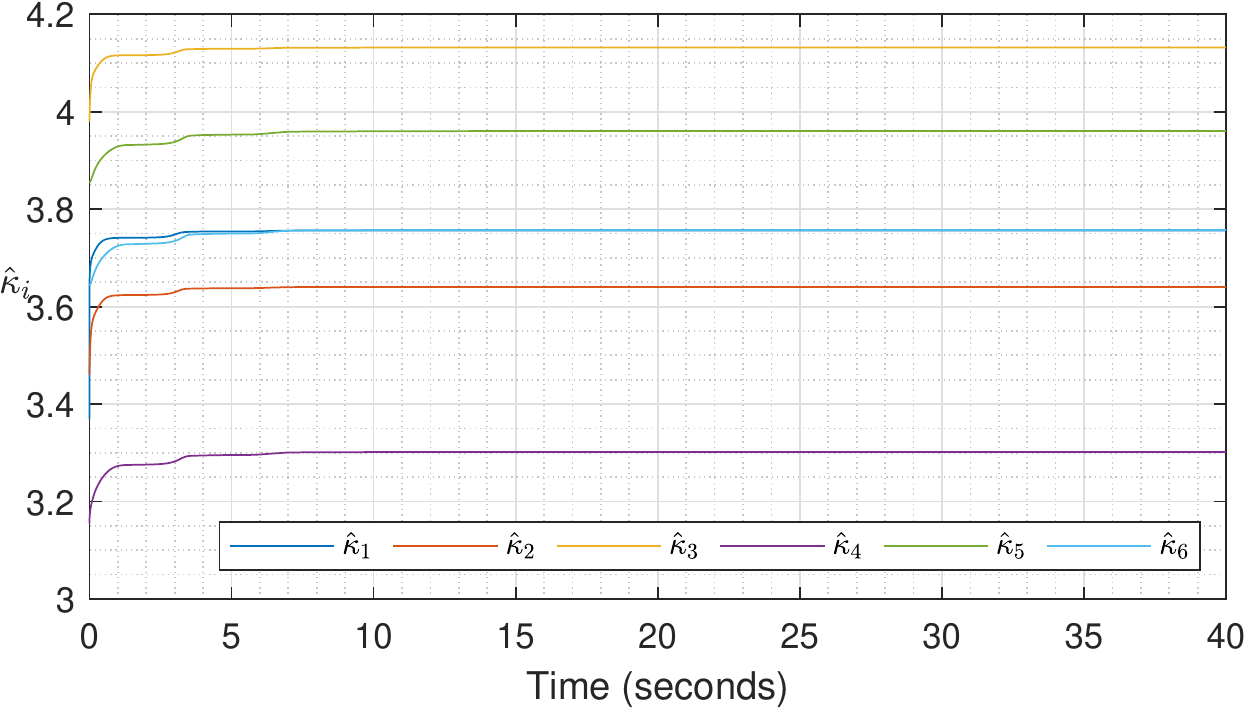,width=\linewidth}

  \caption{Trajectories of $\hat{\kappa}_i$,~ $i = 1, \dots, 6.$}\label{figkapa}
  \bigskip
  \centering\setlength{\unitlength}{0.75mm}
  \epsfig{figure=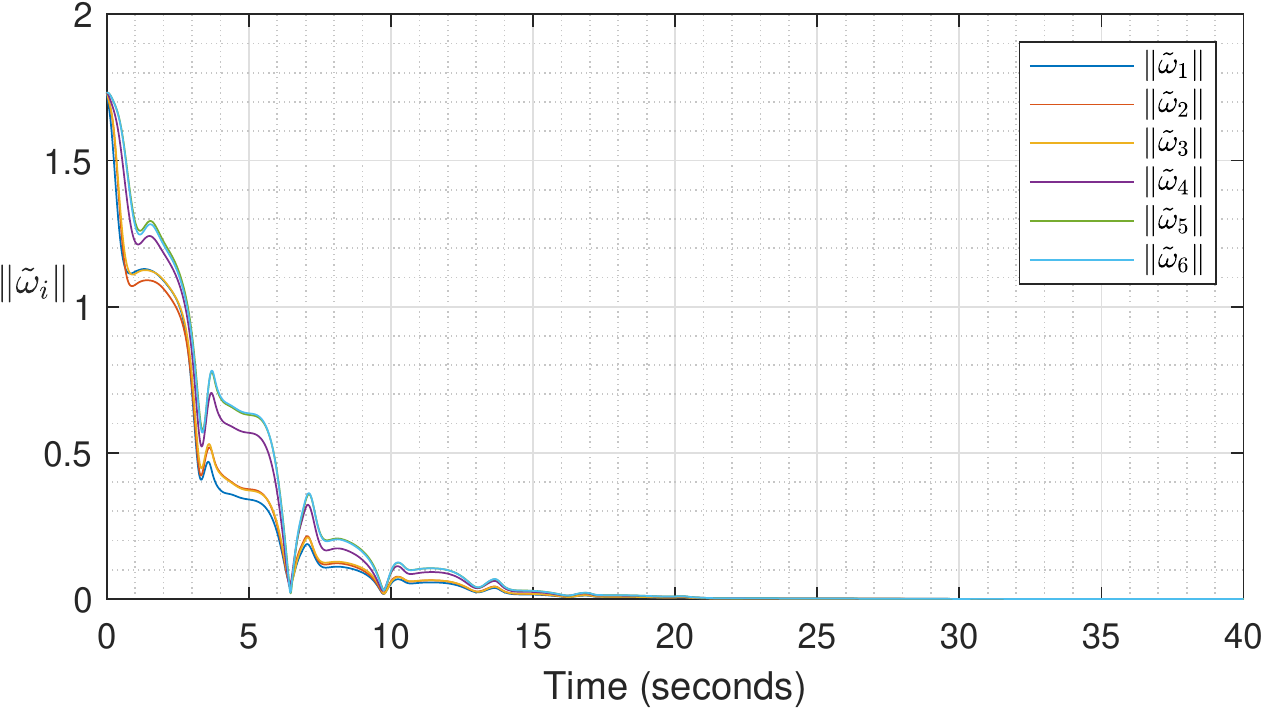,width=\linewidth}
  \caption{Trajectories of $\|\tilde{\omega}_i\|$,~ $i = 1, \dots, 6.$}\label{fig5}
\end{figure}
Simulation is conducted with the following initial conditions: $q_{1}(0)=\col(-1,2)$, $q_{2}(0)=\col(-2,-1)$, $q_{3}(0)=\col(1,-1)$, $q_{4}(0)=\col(2,-1)$, $q_{5}(0)=\col(-3,2)$, $q_{6}(0)=\col(-1,1)$, $\dot{q}_i=0$, $\hat{\theta}_i(0)=0$, $\hat{\kappa}_1(0)=3.3689$, $\hat{\kappa}_2(0)=3.4607$, $\hat{\kappa}_3(0)=3.9816$, $\hat{\kappa}_4(0)=3.1564$, $\hat{\kappa}_5(0)=3.8555$, $\hat{\kappa}_6(0)=3.6448$, $\hat{\omega}_i(0)=0$, $i=1,\dots,6$. The leader's initial condition $v(0)=\col(2,2)$ is chosen in  $\mathds{V}_0=\{v(0)|\|v(0)\|\leq 8\}$, and the actual unknown constants are $a = 1$, $b= 1$ and $c=1$.
Figure \ref{fig2qq} shows that both joint angles $q_i=\col(q_{i1},q_{i2})$ of all six robot arms asymptotically achieve synchronization with the output $q_{N+1}$ of the leader system. The synchronization errors $e_i=q_i - q_7$ and $\dot{e}_i=\dot{q}_i - \dot{q}_7$ of both joint angles and joint angular velocities vanishes as $t \to \infty$, i.e., $||e_i(t)|| \to 0$ and $||\dot{e}_i(t)|| \to 0$ (see Fig. \ref{fig2}). Moreover, the adaptive parameters $\hat{\kappa}_i$ eventually converge to some constants (see Fig. \ref{figkapa}).
%
 By Remark \ref{remwu}, $\phi^T(v)$ is {persistently exciting} for any $v(0)\neq0$. Thus,
  the distributed observer (\ref{compensator2}) will guarantee $\tilde{\omega}_i(t)\rightarrow0$ as $t\rightarrow\infty$, for $i=1,\dots,6$ (see Fig. \ref{fig5}).

\begin{rem}\myr In this  example, each follower agent does not know any knowledge of the leader. It is hard to find a time-varying bound coupled by the network and nonlinear leader, and thus one cannot find a sufficiently large  smooth function to design the nonlinear distributed observer as in \cite{DongChen202021}. Furthermore, since the leader is described by a nonlinear dynamics, those designs proposed in \cite{wanghuang2018} and \cite{baldi2020distributed} for linear leaders are not applicable in our case. More importantly, the solution of the closed-loop system may be unbounded in our case, therefore, the global Lipschitz condition \citep*{wangslotine2004,zhou2006adaptive,yu2009pinning}, the quadratic condition \citep*{lu2004synchronization, delellis2010synchronization}, and the contracting condition \citep*{wang2006theoretical} cannot be used to analyze the convergence of our learning-based distributed observer.
\end{rem}

\section{Conclusion}\label{section6}
This paper  studied the leader-following synchronization problem of heterogeneous multi-agent systems subject to a class of uncertain nonlinear leader systems.
For this purpose, we first established a learning-based distributed observer for the nonlinear leader system whose parameters are unknown. This observer can globally estimate and pass the leader's state to each follower through the communication network without knowing the leader's parameters. Further, under certain persistent excitation condition, this observer can also asymptotically learn the unknown parameters of the leader' system. Some common assumptions in the analysis of the synchronization problem, such as the bounded Jacobian matrix assumption, the quadratic condition, and the known parameters assumption, are removed.
%
Based on this distributed observer, finally, we synthesized an adaptive distributed control law for solving the leader-following synchronization problem of multiple heterogeneous Euler-Lagrange systems via the certainty equivalence principle.
{\myr
The Lyapunov function candidate
\eqref{reeq3b} relies heavily on the
symmetry property of the Laplacian matrix of the communication graph of the undirected graph. It turns out to be very challenging in extending Theorem \ref{bounded2} to the case of directed graphs.
In
the future, we will consider developing a learning-based
distributed observer for multi-agent systems with an uncertain leader over
general directed graphs.
}

%
\bibliographystyle{ifacconf}
\bibliography{myref}
\end{sloppypar}
\end{document}